\theoremstyle{thmstyleone}%
\newtheorem{theorem}{Theorem}[section]% meant for sectionwise numbers
\newtheorem{definition}{Definition}
\theoremstyle{thmstyletwo}%
\theoremstyle{thmstylethree}%
\begin{document}

\title{Vector Traits Shape Disease Persistence: A Predator–Prey Approach to Dengue}

\author[1,2]{Piyumi Chathurangika}%\thanks{Email: piyumic.maths@stu.cmb.ac.lk}}
\author[1]{Tharushika Peiris}%\thanks{Email: 2020s18156@stu.cmb.ac.lk}}
\author[3]{Lakmini S. Premadasa}%\thanks{Email: LPremadasa@txbiomed.org}}
\author[1]{S. S. N. Perera}%\thanks{Email: ssnp@maths.cmb.ac.lk}}
\author[4]{Kushani De Silva\thanks{Email: sarachchilag@lamar.edu, kushanipdesilva@gmail.com}}

\affil[1]{Research \& Development Centre for Mathematical Modeling, Department of Mathematics, Faculty of Science, University of Colombo, Colombo 00030, Sri Lanka}
\affil[2]{Department of Electrical and Electronics Technology, Faculty of Technology, Rajarata University of Sri Lanka, Mihintale 50300, Sri Lanka}
\affil[3]{Texas Biomedical Research Institute, San Antonio, Texas 78227, USA}
\affil[4]{Department of Mathematics, Lamar University, Beaumont, Texas 77705, USA}

\maketitle
%%=============================================================%%
%% GivenName	-> \fnm{Joergen W.}
%% Particle	-> \spfx{van der} -> surname prefix
%% FamilyName	-> \sur{Ploeg}
%% Suffix	-> \sfx{IV}
%% \author*[1,2]{\fnm{Joergen W.} \spfx{van der} \sur{Ploeg} 
%%  \sfx{IV}}\email{iauthor@gmail.com}
%%=============================================================%%

%\author[1, 2]{\fnm{Piyumi}\sur{Chathurangika}}\email{piyumic.maths@stu.cmb.ac.lk}
%\author[1]{\fnm{Tharushika} \sur{Peiris}}\email{2020s18156@stu.cmb.ac.lk}
%\author[3]{\fnm{Lakmini} \sur{S. Premadasa}}\email{LPremadasa@txbiomed.org}
%\author[1]{\fnm{S. S. N.} \sur{Perera}}\email{ssnp@maths.cmb.ac.lk}
%\author*[4]{\fnm{Kushani} \sur{De Silva}}\email{sarachchilag@lamar.edu}
%
%\affil[1]{\orgdiv{Research \& Development Centre for Mathematical Modeling}, \orgname{Department of Mathematics}, \orgaddress{\street{Faculty of Science, University of Colombo}, \state{Colombo}, \postcode{00030}, \country{Sri Lanka}}}
%\affil[2]{\orgdiv{Department of Electrical and Electronics Technology}, \orgname{Faculty of Technology}, \orgaddress{Rajarata University of Sri Lanka} \city{Mihintale}, \country{Sri Lanka}}
%\affil[3]{\orgdiv{Texas Biomedical Research Institute}, \city{San Antonio}, \state{Texas}, \postcode{78227}, \country{USA}}
%
%\affil[4]{\orgdiv{Department of Mathematics}, \orgname{Lamar University}, \orgaddress{\city{Beaumont}, \state{Texas}, \postcode{77705}, \country{USA}}}

\date{}
%, \orgname{University of North Carolina at Greensboro},\orgaddress{ \city{Greensboro}, \state{NC}, \postcode{27402}, \country{USA}}}

%\affil[3]{\orgdiv{Department}, \orgname{Organization}, \orgaddress{\street{Street}, \city{City}, \postcode{610101}, \state{State}, \country{Country}}}

%%==================================%%
%% Sample for unstructured abstract %%
%%==================================%%
\begin{abstract}
Dengue continues to pose a major global threat, infecting nearly 390 million people annually. Recognizing the pivotal role of vector competence ($v_c$), recent research focuses on mosquito parameters to inform transmission modeling and vector control strategies.This study models interactions between Aedes vectors and dengue pathogens, highlighting $v_c$ as a key driver of within-vector infection dynamics and endemic persistence. Using a predator–prey framework, we show that endemic conditions emerge naturally from the biological interplay between the vector’s strategies to pathogen pressure and we prove global stability of such conditions. Our results reveal that under tropical and subtropical environmental pressures, the innate immune system of vectors cannot offset high $v_c$ during endemic outbreaks, highlighting a fundamental biological trade-off: vectors can evolve increased transmission potential but cannot enhance immune capacity. This constraint defines the limits of their evolutionary response to pathogen-driven selection and drives instability in disease transmission dynamics.
\end{abstract}

   %By integrating ecological and epidemiological dynamics, this study provides valuable insights into infection mechanisms of vectors and important information to improve vector control methods.

\textbf{Keywords:} vector-pathogen dynamics, predator-prey, consumption rates, dengue, vector competence, pathogen inactivation rate

\textbf{MSC Classification:} 34-xx, 65-xx, 03B48, 92-08, 92D30, 49J15
\maketitle

\section{Introduction}\label{sec1}
%general intro and motivation for vc

Amid enduring global health challenges that have long threatened human well-being, dengue continues to persist as a major concern, causing nearly 390 million infections annually and remaining without a definitive cure or an effective vaccine despite decades of scientific effort \citep{bancroft1906aetiology,ulgheri2025decoding}. Recent decades have witnessed a remarkable expansion of dengue, with roughly one-quarter of the global population now living in regions where the disease is endemic, underscoring its enduring and widespread public health burden \citep{couderc2025decoding}. For decades, dengue research has focused on its clinical and epidemiological aspects—work that has provided essential insights into the disease. However, competent mosquitoes are undoubtedly the most important puzzle piece in the disease transmission. Recognizing this importance, recent studies now emphasize the ecological and social contexts of \textit{Aedes} to advance innovative strategies for modeling dynamics, vector control, and long-term mitigation. Genetic alterations of mosquitoes to target key parameters influencing their vectorial capacity—such as insecticide resistance, and sensitivity to environmental factors like weather and climate—has emerged as a promising direction in vector control research \citep{Vs42,Vs33,Vs31}. The effectiveness of the widely adopted Wolbachia-infected mosquito strategy to block pathogen transmisson is largely determined by its impact on key parameters of vectorial capacity—particularly the mosquito’s ability to transmit the pathogen, known as vector competence ($v_c$) \citep{wang2025gene,Vs33,Vs44,souza2019aedes}. Transgenesis approach is another method of vector control to curtail disease transmission by impairing $v_c$ \citep{Vs45,Vs46,merkling2025dengue}. Among these, several studies have specifically examined $v_c$ in the framework of pathogen-vector interactions offering within-vector infection dynamics \citep{couderc2025decoding,Vs_12, Vs_13, Vs_39,Vs_8,G_41}. Another potential mechanism for disease control via the vector—demonstrated experimentally—is through the Toll pathway. Inactivating this key immune signaling pathway in mosquitoes, for example by silencing components like MYD88, reduces their ability to limit or inactivate viral infections, resulting in increased viral replication. Quantifying the virus inactivation rate mediated by the Toll pathway provides critical insight into the strength of vector immune defenses and their role in controlling virus levels within the vector \citep{mukherjee2019mosquito,tauszig2002drosophila,xi2008aedes}. Given the demonstrated significance of key parameters—particularly $v_c$ and pathogen inactivation rate showing immune strength—it is essential to incorporate them when investigating the transmission dynamics of mosquito-borne diseases.\\

%entering into vector-pathogen interaction (game theory to PP)

The intrinsic imbalance between opposing forces of gain and loss fundamentally governs fluctuations in a species’ population abundance. A stable equilibrium of such fluctuations will provide a sustainable ecosystem. When the forces of gain outweigh those of loss, the population experiences exponential growth; conversely, when losses surpass gains, the population faces inevitable decline toward extinction. This fundamental asymmetry highlights the fragile equilibrium that dictates whether a population persists or collapses. Therefore, the ways in which species interact, shaped by their biological traits, determine whether the relationship between two populations fosters a stable persisting ecosystem. In the context of dengue, the introduction of the pathogen into the vector’s body through a blood meal imposes selective pressure on the mosquito’s immune system, leading to corresponding gains or losses for the vector. The outcome of these gains or losses is dictated by the strategies the vector reasons to counter, assuming the pathogen are equally guided by the reason. A strategy is a set of rules that an individual (or player) follows to make decisions in response to the actions of others, with the goal of maximizing its fitness, payoff, or success in a given context \cite{eG_03}. In analogy to a biological game, we simplify the system’s complexities by assuming that each player (vector and pathogen) has exactly two strategies from which to choose. The vector’s two strategies are determined by how it allocates its energetic and molecular resources between its primary reproductive functions (tolerance), such as vitellogenesis and egg maturation \citep{Im_15, Im_17}, and the activation of its immune system, RNA interference (RNAi) pathway, Toll and IMD signaling, and melanization. to combat the pathogen (resistance) \citep{Im_01, G_44}. However, activating these immune pathways comes at a cost, both metabolically and in terms of delayed or reduced fecundity, highlighting a fundamental trade-off between resistance and reproduction \citep{Im_18}. Based on these stategies, pathogen will either replicate (upon tolerance) or silenced (upon resistence). This reflects the structure of classical predator–prey dynamics, in which the predator’s (pathogen’s) growth—its gains or losses—depends on the abundance of the prey (vector), and its biological traits.\\

Predator–prey models have long served as a foundational framework for understanding the dynamics of interacting populations. Over the years, these models have been refined to include more realistic assumptions such as functional responses, predator handling time, and environmental variability \citep{Ho_02, Ho_03}. A comprehensive review of various predator–prey modeling approaches highliting various functional responses is provided by \citep{novak2021geometric}. These models have been effectively applied to explore disease dynamics involving vertebrate hosts and their pathogens, as demonstrated in \citep{fenton2010applying, agyingi2020modeling}, with an extensive review available in \citep{friedman2022hierarchy}. Moreover, predator–prey frameworks have been extensively applied to the study of parasite–host and insect–host interactions, providing insights into how natural enemies of parasites can be leveraged for biological pest control. Notably, these systems are often modeled in discrete time, reflecting the generational or seasonal structure of interactions, which distinguishes them from classical continuous-time Lotka–Volterra models \cite{PP_100, hassell2000host, hassell1986generalist, ives2005synthesis, PP_13, G_63,PP_05}. Furthermore, several studies have conceptualized the pathogen as the prey and the host immune system as the predator, framing infection dynamics within a predator–prey paradigm \citep{PP_12, G_61, G_45}. Unlike previous predator–prey applications, this study introduces a novel inversion: the Aedes mosquito vector, the transmitter of dengue, is modeled as the prey, while the dengue virus is treated as the predator, owing to its enigmatic nature that continually challenges scientific understanding. This predator–prey framework demonstrates how the vector’s biological traits - specifically $v_c$ - to capture how the biological characteristics of the vector mediate the impact of pathogen pressure on mosquito populations, ultimately shaping disease transmission dynamics.\\

The paper is structured as follows: Section 2 provides a detailed explanation of the development of the mathematical model, grounded in biological context. Section 3 presents the results of the model analysis, including equilibria, their stability, and supporting numerical findings. Finally, Section 4 concludes the paper by summarizing key findings, limitations, and suggesting directions for future research.

\section{The model formulation}

Dengue transmission dynamics are commonly modeled using SIR–SI compartmental frameworks (see Fig. \ref{fig:fig-a}). In such models, the human population is divided into three mutually exclusive compartments—susceptible ($S_h$), infected ($I_h$), and recovered ($R_h$)—while the vector population is divided into two compartments—susceptible ($S_v$) and infected ($I_v$). External (e.g., climate, mobility) and internal (e.g., immunity, pathogen evolution) factors drive continuous movements of individuals within and between these compartments. The dengue virus (pathogen) acts as the primary ecological driver of these transitions. Conceptually, the traditional SIR–SI model can be simplified by introducing an explicit pathogen compartment, replacing the abstract “environmental pressure” that drives transitions between vector states (see Fig. \ref{fig:fig-ab}(b)). Within the vector population, the transition from susceptible ($S_v$) to infected ($I_v$) occurs through pathogen predation, wherein the virus functions ecologically as a predator that exploits susceptible vectors as its resource base. This interaction mirrors a predator–prey dynamic, in which the pathogen’s success depends on its ability to “capture” and utilize susceptible vectors, thereby sustaining its persistence and propagation within the host–vector ecosystem.\\

One of the fundamental limitations of assuming continuous compartmental transitions driven by constant transmission rates in the SIR–SI or predator–prey framework is that it overlooks the influence of additional ecological and epidemiological factors. Variables such as mosquito population density, human mobility and urbanization, climatic conditions (e.g., temperature and rainfall), and prior immunity levels within the human population profoundly shape these transitions. The combined effects of these fluctuating factors often disrupt equilibrium conditions, creating imbalances that trigger outbreaks. As a result, dengue transmission exhibits episodic dynamics—characterized by intermittent surges and declines in infection—rather than a smooth and continuous flow of disease transmission. This assumption holds only under endemic conditions, where multiple serotypes co-circulate and sustained human–pathogen–vector contact is maintained. Such conditions are rarely stable in real-world dengue dynamics, although recent studies suggest that many regions may indeed experience endemic transmission. Under endemic assumptions, population abundances are expected to exhibit continuous fluctuations, driven by the imbalanced forces between the “prey” (susceptible vectors) and the “predator” (pathogen). Despite uncertainties regarding endemicity, this framework enables the identification of conditions under which endemic transmission is sustained, particularly in relation to vector traits and vector biological parameters.

\begin{figure}[htb]
	\centering
		\begin{subfigure}{0.45\linewidth}
			\includegraphics[width=\textwidth]{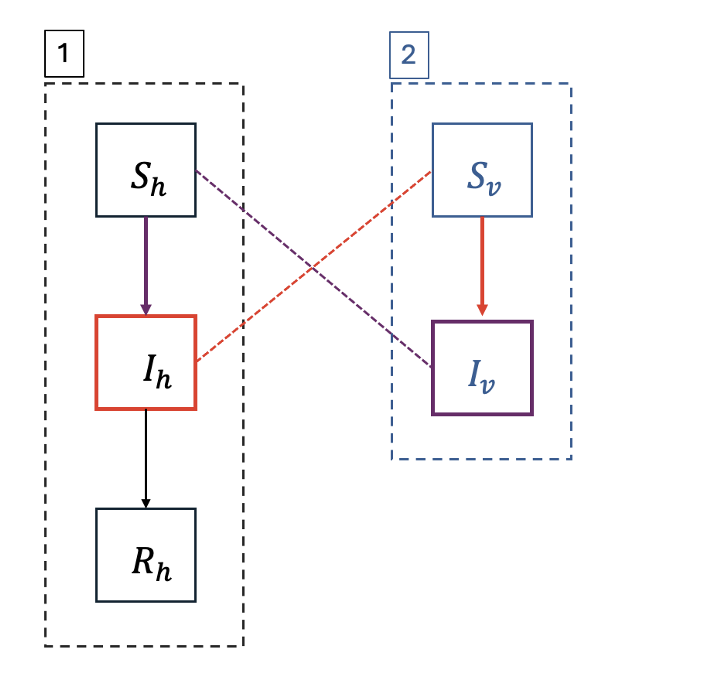}
			\caption{\label{fig:fig-aa}}
		\end{subfigure}
		\begin{subfigure}{0.45\linewidth}
			\includegraphics[width=\textwidth]{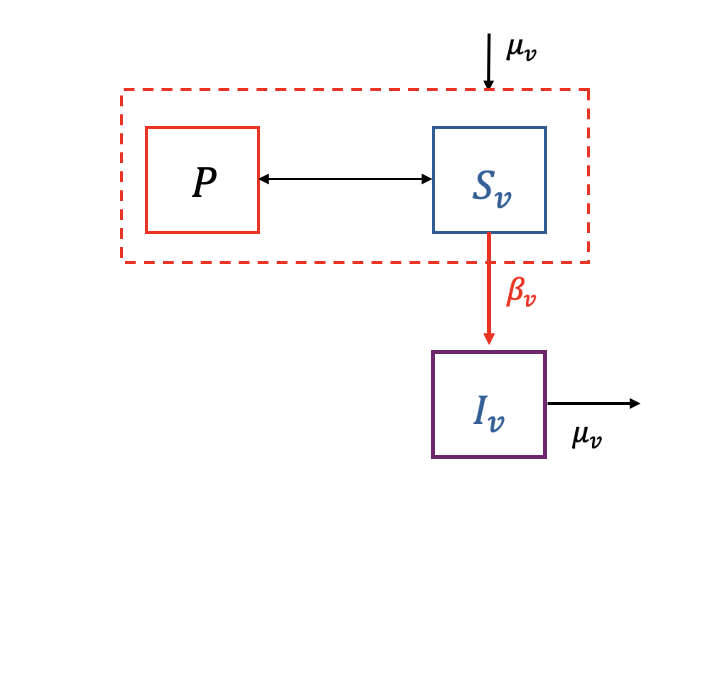}
			\caption{\label{fig:fig-ab}}
		\end{subfigure}
      \caption{The transmission structures: (a) Schematic diagram of the coupled SIR (block 1)–SI (block 2) system. Vector transition from susceptibility to infection is driven by $I_h$ (red dashed line), while human transition from $S_h$ to $I_h$ is driven by $I_v$ (purple dashed line). (b) The infected human population drives vector–pathogen interactions, shifting vectors from susceptible to infected. The solid double-sided arrow denotes the predation relationship between the pathogen and susceptible vectors.}
\label{fig:fig-a}
\end{figure}

A generalized predator–prey framework describing vector–pathogen interactions is formulated in Eq. \eqref{general}. Here, 
$S_v$ denotes the susceptible vector population density (prey), while $P$ represents the pathogen population density (predator), capturing the antagonistic dynamics that underpin infection-driven ecological regulation. Population densities represent the number of individuals per unit area, analogous to those in classical predator–prey systems. The functions $ g(S_v) $ and $ f(S_v) $ respectively denote the prey growth rate and the consumption rate of the prey, while $ \delta $ represents the predator's removal rate from the system, i.e. vector immunity inactivating the pathogen. Here, the pathogen’s consumption rate is assumed to contribute equally to the predator’s gain and the prey’s loss. With the general model in ~\eqref{general}, we justify in the next subsections the functional forms for $ g(S_v) $ and $ f(S_v) $ based on vector biology.

\begin{equation}
\begin{cases}
\vspace{0.1cm}
\dfrac{dS_v}{dt}=g(S_v)-Pf(S_v)\\
\dfrac{dP}{dt}=P f(S_v)-\delta P
\end{cases}
\label{general}
\end{equation}

\subsection{Vector growth rate $ g(S_v) $}

Immune responses such as activation of the RNA interference (RNAi) pathway or Toll and IMD signaling require substantial energy and molecular resources. These same resources are also essential for reproductive processes like vitellogenesis, egg maturation, and oviposition. As a result, during pathogen pressure, mosquitoes often divert its energy from reproduction toward immunity, leading to reduced fecundity, delayed oviposition, or impaired egg viability. This trade-off has been observed across multiple insect systems and is supported by studies showing that dengue-infected \textit{A. aegypti} exhibit altered metabolic and immune signaling that can suppress reproductive investment. However, the extent of this inverse relationship can vary depending on factors such as nutritional status, genetic background, environmental conditions, and the severity of infection. In some cases, mosquitoes may prioritize reproduction even when infected, a strategy known as terminal investment \citep{F_12, Im_18, G_64}, especially in high-risk environments. Nonetheless, under typical conditions, the inverse relationship between immune activation (resistance) and reproduction (tolerance) is a key factor influencing vector competence ($v_c$) and transmission potential.

%Therefore, vector reproduction plays a crucial role in determining the role of $v_c$ and vice versa, offering valuable insights, especially for vector control strategies when developing genetically modified vectors. 

\begin{figure}[ht!]
		\centering
				\includegraphics[width=0.5\textwidth]{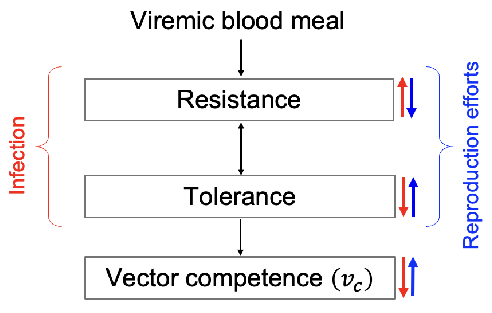}
		\caption{\label{trade-off} Fitness trade-off between resistance (immune defense) and tolerance (reproduction) within the vector's body after taking a viremic blood meal. Red arrows represent the flow of trade-off when resistance is increased, whereas blue arrows show the same flow when the tolerance is increased.}
	\end{figure}

In the absence of the pathogen, vectors ($S_v$) will grow according to the self-limiting logistic process:
\begin{align*}
\mu_vS_v\left( 1-\dfrac{S_v}{K}\right), 
\end{align*}
 where $\mu_v$ is the growth rate of $S_v$ and $K$ is the carrying capacity prior to limitation by disease transmission. When reproductive fitness increases due to higher tolerance, resistance is reduced, permitting a higher $v_c$. In this case, the carrying capacity scales relative to $v_c$, i.e. $v_cK$. Conversely, when reproduction is constrained by higher resistance, lower $v_c$ values proportionally reduce the carrying capacity. Vectors have a natural birth rate, $\mu_v$, but under pathogen pressure, reproduction is suppressed by resistance efforts. Given the critical relationship between vector reproduction and $v_c$ (Fig. \ref{trade-off}), the birth rate is scaled by $v_c$ to reflect pathogen-induced stress, yielding an updated growth rate of $v_c\mu_v$. Thus, the vector growth rate $g(S_v)$ can be expressed as,
\begin{equation}
	\label{g(v)}
	g(S_v)=v_c \mu_v S_v\left(1-\dfrac{S_v}{v_c K}\right).
\end{equation}
\subsection{Consumption rate $f(S_v)$}

Despite extensive study of predator–prey dynamics in ecology, experimental quantification of how pathogens exploit their vectors remains virtually unexplored. In this context, dengue virus–\textit{Aedes} interactions present a unique system in which the pathogen can be conceptualized as a predator acting on its vector ``prey.” This study represents the first effort to formalize this relationship using a predator–prey framework, thereby bridging classical ecological theory with vector-borne disease dynamics. To establish a mechanistic baseline, we systematically investigate all three Holling-type functional responses (Type I, II, and III), which describe different modes of predator consumption. With this, we aim to determine whether using one type of consumption rate over the others provides a clear advantage in understanding conditions for disease persistence. The general form of Holling's type functional response is given in Eq.~\eqref{holling-general}.
\begin{align} \label{holling-general}
	f_q(S_v) = \begin{cases}
		aSv, & \text{ for } q=0,\\
		\dfrac{aS_v^q}{1+ahS_v^q}, & \text{for } q =1,2,
	\end{cases}
\end{align}
where $a$ is the attack rate and $h$ is the handling time. The values $q=0,1,2$ correspond to Type I, II, and III functional responses respectively. \\

When a pathogen enters a vector and establishes infection, multiple internal barriers must be overcome for replication. Once these barriers are cleared, the vector becomes infectious and can transmit the pathogen to another host, changing its status from susceptible to infected (Fig. \ref{fig:fig-ab}). The time required for this process is called the extrinsic incubation period, which, in predator–prey terminology, can be hypothesized to the handling time, $h$. On the other hand, the attack rate ($a$), representing the likelihood of infection, depends on two factors: (1) the probability that a vector facilitates pathogen transmission by biting a host, denoted $\beta_v$, and (2) the vector’s susceptibility to infection, denoted $v_c$. That is,
\begin{align*}
	p\left(\text{attack or infection} \right) &= p\left(\text{host bite} \right)p\left( \text{infection$|$host bite}\right), \\
	a&=\beta_v v_c. \notag 
\end{align*}
Empirical studies use metrics like Infection Rate (IR), Dissemination Efficiency (DE), and Transmission Efficiency (TE) to assess $v_c$, typically ranging from 0 to 1 \citep{Vs8, hardy1983, Vs14, Op_2}. In our study, $v_c$ is assumed to be assessed based on infected human population densities from successfully infected \textit{Aedes}. Therefore, in this study, $v_c$ is assumed to reflect TE. Below we showcase the final ODE system (Eqs. \eqref{holling-specific}- \eqref{final model}).
\begin{align} \label{holling-specific}
	f_q(S_v) = \begin{cases}
		\beta_v v_c S_v, & q=0,\\
		\dfrac{\beta_v v_c S_v^q}{1+\beta_v v_c E_p S_v^q}, & q=1,2. 
	\end{cases}
\end{align}
\begin{equation}
	\begin{cases}
		\vspace{0.1cm}
		\dfrac{dS_v}{dt}=v_c \mu_v S_v\left(1-\dfrac{S_v}{v_c K}\right)-f_q\left( S_v\right) P &\equiv S_v F\left(S_v,P \right) ,\\
		\dfrac{dP}{dt}= \alpha f_q\left( S_v\right) P-\delta P &\equiv P G\left(S_v,P \right),
	\end{cases}
	\label{final model}
\end{equation}
where $\alpha$ is the replication rate, scaling how efficiently the pathogens replicate inside the vector body. Based on biological considerations, the initial conditions are,
\begin{align} \label{Eq. init}
S_v\left( 0\right)>0, \text{ and } \, P\left(0 \right) >0.
\end{align}
The parameter definitions of the dynamical system in Eqs. \eqref{holling-specific} - \eqref{final model} are given in Table \ref{Table: parameter definition}. The functional response is influenced by the predator's maximum consumption capacity, search speed, and attack success \citep{Ho_03}. As a result, the consumption rate curves (Type I, II, or III) are a result of the trade-offs between these factors. In a Type I functional response, consumption occurs in direct proportion to prey density, assuming no limit to the predator’s consumption capacity—that is, consumption is independent of prey density. However, at the highest prey densities, the upper bound of consumption is given by $\beta_v v_c / (1 + \beta_v v_c E_p)$. This indicates that in Type II and Type III functional responses, which incorporate nonlinear dynamics, the maximum consumption capacity is always lower than that of the linear Type I response.
\begin{table}[ht!]
\footnotesize
	\caption{Parameter definitions of the model given by Eqs.~\eqref{holling-specific} - \eqref{final model}.}
	\centering
	\begin{tabular}{lll}
		\hline
		Parameter	 	& Definition 						&	Literature Value	\\%	& Value range\\
		\hline
		$\mu_v$ 		& Vector mortality rate				&	1/6 \citep{VDCI}\\%&-\\
		$\beta_v$ 		& Transmission rate from host to vector	&	0.375 \citep{G_12}	\\%&-\\
		$E_p$ 		& Extrinsic incubation period (weeks)	&	2 \citep{G_96}		\\%&-\\
		$K$ 			& Carrying capacity of vectors			&	1			\\%	&-\\
		$v_c$ 		& Vector competence				&	-			\\%	&[0,1]\\
		$\alpha$		&	Pathogen replication rate			&	-	\\%			&$\geq 0$\\
		$\delta$ 		& Pathogen inactivation rate			&	-			\\%	&	[0 1]\\
		\hline
	\end{tabular}
\label{Table: parameter definition} 
	%\caption{\label{para} Parameter definition.}
\end{table}

Type II functional responses provide a baseline for consumption that scales with prey density. Even with abundant pathogens, the consumption rate is limited by the host-bite probability ($\beta_v < 1$). In Type III, hyperbolic dependence on $S_v^2$ reflects vector competition for blood meals, reducing pathogen consumption capacity and highlighting how vector density and behavior constrain transmission dynamics. We can therefore assume a Type III functional response during periods of rapid mosquito population growth, such as following the rainy season. In regions with monsoon rains, increased breeding sites drive a delayed rise in mosquito abundance—for example, in Sri Lanka, this time lag is approximately 10 weeks \citep{OP_1}. Consequently, even if the transmission rate remains constant, $v_c$ may oscillate between Type II and Type III dynamics depending on seasonal rainfall (example work in \citep{chathurangika2024determining}), highlighting the critical influence of environmental conditions on pathogen–vector interactions.
\section{Equilibrium analysis}

In this section, we showcase the preliminary results, equilibrium analysis, and their stability of the system in \eqref{holling-specific}-\eqref{final model}. Firstly, Theorem \ref{th:posi} states the positivity of the solutions to the system \eqref{holling-specific}-\eqref{final model}.\\

%\subsubsection*{Positivity of the solutions}
\begin{theorem} \label{th:posi}
All solutions of the system in \eqref{holling-specific}-\eqref{final model} with initial conditions $S_v(0), P(0) >0$ remains positive for all $t>0$.
\end{theorem}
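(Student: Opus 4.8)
The plan is to exploit the multiplicative structure already exhibited in~\eqref{final model}, namely that the vector field factors as $dS_v/dt = S_v\,F(S_v,P)$ and $dP/dt = P\,G(S_v,P)$. This factorization makes the coordinate axes $\{S_v=0\}$ and $\{P=0\}$ invariant, so that a trajectory launched from the open first quadrant cannot cross either axis, and positivity follows. Before carrying this out I would first record local existence and uniqueness: for each $q\in\{0,1,2\}$ the denominator $1+\beta_v v_c E_p S_v^{\,q}$ is bounded below by $1$ on $S_v\ge 0$, so $f_q$ is $C^1$ on the closed nonnegative quadrant and the right-hand sides of~\eqref{final model} are $C^1$ there; by the Picard--Lindel\"of theorem a unique solution exists on a maximal interval $[0,T_{\max})$.

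Next I would verify that the factor functions $F$ and $G$ are genuinely continuous up to the boundary, which is the only place where the factored form could fail. Writing $F(S_v,P)=v_c\mu_v\!\left(1-\tfrac{S_v}{v_cK}\right)-\tfrac{f_q(S_v)}{S_v}\,P$, the single delicate term is $f_q(S_v)/S_v$: for $q=0$ this equals $\beta_v v_c$, and for $q=1,2$ it equals $\beta_v v_c S_v^{\,q-1}/(1+\beta_v v_c E_p S_v^{\,q})$, each of which is continuous and finite for all $S_v\ge 0$. Hence $F$, and likewise $G(S_v,P)=\alpha f_q(S_v)-\delta$, is continuous along any trajectory.

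With this in hand the core step is an integrating-factor argument. Along the solution, $\tfrac{d}{dt}\ln S_v(t)=F(S_v(t),P(t))$ and $\tfrac{d}{dt}\ln P(t)=G(S_v(t),P(t))$, which integrate to
\begin{equation*}
S_v(t)=S_v(0)\exp\!\left(\int_0^t F(S_v(\tau),P(\tau))\,d\tau\right),\qquad
P(t)=P(0)\exp\!\left(\int_0^t G(S_v(\tau),P(\tau))\,d\tau\right).
\end{equation*}
Since $S_v(0),P(0)>0$ and the exponential is strictly positive, both components remain strictly positive for every $t\in[0,T_{\max})$. Equivalently, and perhaps cleaner to present, one can argue via a first-hitting time: set $t^\star=\inf\{t>0:S_v(t)P(t)=0\}$ and observe that at $t^\star$ the vanishing coordinate has zero time-derivative (each right-hand side carries that coordinate as a factor), so the boundary is never reached.

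The main obstacle is not conceptual but a matter of rigor at the boundary: one must confirm the integrals above are finite on $[0,T_{\max})$, i.e. that the integrands do not blow up before the solution itself does. This reduces exactly to the continuity of $f_q(S_v)/S_v$ at $S_v=0$ checked above, together with the fact that on any compact subinterval of $[0,T_{\max})$ the continuous integrands are bounded. No finite-time blow-up of $S_v$ or $P$ can destroy positivity, since the representation need only hold on the interval where the solution exists. I therefore expect the proof to be short, with the verification that $F$ extends continuously to $S_v=0$ being the one point that genuinely requires checking for each functional-response type.
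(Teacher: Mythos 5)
Your proof is correct, but it takes a genuinely different route from the paper's. Both arguments rest on the same structural fact---the factorization $\dot S_v = S_v F(S_v,P)$, $\dot P = P\,G(S_v,P)$ makes the coordinate axes invariant---but the paper exploits it by contradiction: invoking existence--uniqueness, it supposes a trajectory starting in the open first quadrant reaches an axis at some $t_0$, notes that on the $S_v$-axis $dP/dt\equiv 0$, and concludes that any trajectory arriving at the axis must remain on it and so cannot cross it, a contradiction. You instead argue directly: after verifying that $F$ and $G$ extend continuously to the closed quadrant (the one delicate term, $f_q(S_v)/S_v$ at $S_v=0$, which you check for each $q\in\{0,1,2\}$), you write $S_v(t)=S_v(0)\exp\left(\int_0^t F\,d\tau\right)$ and $P(t)=P(0)\exp\left(\int_0^t G\,d\tau\right)$ on the maximal interval, and strict positivity is immediate. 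Your route is more quantitative and in one respect tighter: the theorem asserts \emph{strict} positivity, so one must exclude trajectories merely touching an axis, not only crossing it; the paper's phrasing covers crossing and implicitly leans on (backward) uniqueness for touching, whereas your exponential formula rules out both at once, and it also makes explicit the boundedness-on-compacts step that keeps the integrals finite. One caution: your parenthetical first-hitting-time alternative---``at $t^\star$ the vanishing coordinate has zero time-derivative, so the boundary is never reached''---is not conclusive on its own, since a function can reach zero with vanishing derivative (e.g.\ $x(t)=(1-t)^3$ at $t=1$); that aside becomes rigorous only when combined with uniqueness (the axis is itself a solution) or with your exponential representation applied on $[0,t^\star)$ together with continuity at $t^\star$. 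Since your primary argument is the exponential representation, this does not affect the correctness of the proof.
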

	\begin{proof}
	From the existence and uniqueness theorem, initial value problem of system in \eqref{holling-specific}-\eqref{final model} has a unique continuous solution. \\
	
	Suppose a trajectory with initial conditions $S_v(0), P(0) >0$ leaves the first quadrant of $S_v$, $P$ phase plane. Then it must be either $P(t_0)=0$ or $S_v(t_0)=0$ at some $t_0>0$ . Without loss of generality, let us assume $P(t_0)=0$. Then the trajectory must be on $S_v $ axis at $t_0$. Notice that anywhere on the $S_v$ axis, $P=0$ and thus $dP/dt=0$. Therefore the trajectory only depends on $dS_v/dt$. Hence, any trajectory arrive $S_v$ axis must remain on $S_v$ axis. Therefore such trajectory cannot cross the $S_v$ axis. This is a contradiction. Therefore, All solutions of the system with initial conditions $S_v(0), P(0) >0$ remains positive.
\end{proof}

\begin{definition} \label{def}
	Given the system in \eqref{holling-specific}-\eqref{final model}, we say a point $\left( S_v^*,P^*\right) $ an equilibrium of this system if $\dot{P}=0$ and $\dot{S_v}=0$ simultaneously. For such a point, the constant function $\left( S_v(t),P(t)\right) \equiv \left( S_v^*,P^* \right)$ is a solution of the system.
\end{definition}
\vspace*{3mm}
\noindent
According to Definition \ref{def}, for the three types of functional responses, the equilibrium points occur at the following instances:
\begin{description}
	\item[$E_1$ (Trivial):] $S_v=0, F\left( S_v,P\right) \neq 0$ and $P=0, G\left(S_v,P \right) \neq 0$,\\
	\item[$E_2$ (Disease-free):] $S_v\neq 0, F\left( S_v,P\right) = 0$ and $P=0, G\left(S_v,P \right) \neq 0$,\\
	\item[$E_3$ (Endemic):] $S_v\neq 0, F\left( S_v,P\right) = 0$ and $P\neq0, G\left(S_v,P \right) = 0$,\\
	\item[$E_4$ (Vector-free):] $S_v = 0, F\left( S_v,P\right) \neq 0$ and $P\neq0, G\left(S_v,P \right) = 0$.
\end{description}
Although there exists equilibrium points at first three instances $E_1,E_2, \text{ and } E_3$, a vector-free equilibrium, $E_4$ does not exist in the system \eqref{holling-specific}-\eqref{final model} (see Theorem \ref{th:vf}).\\

\begin{theorem} \label{th:vf}
There does not exist a vector-free equilibrium point in the system \eqref{holling-specific}-\eqref{final model} $\forall \left( S_v^*,P^*\right)  \in \mathbb{R}_+^2$.
\end{theorem}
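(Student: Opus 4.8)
The plan is to show directly that the equilibrium condition $\dot P=0$ cannot be satisfied at any candidate vector-free point. By Definition~\ref{def}, a vector-free equilibrium $E_4$ would require $S_v^*=0$ together with $P^*\neq 0$ and $G(S_v^*,P^*)=0$, so that $\dot P=P^*G(0,P^*)=0$ holds despite $P^*\neq 0$. Reading off the second equation of~\eqref{final model}, the factor $G$ is $G(S_v,P)=\alpha f_q(S_v)-\delta$, which in fact does not depend on $P$. Thus the whole question reduces to evaluating $f_q$ at $S_v=0$.

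The first step is to verify that $f_q(0)=0$ for every functional-response type. Substituting $S_v=0$ into~\eqref{holling-specific}: for $q=0$ we obtain $\beta_v v_c\cdot 0=0$, and for $q=1,2$ the numerator $\beta_v v_c\,0^{q}$ vanishes while the denominator $1+\beta_v v_c E_p\,0^{q}=1$, so again $f_q(0)=0$. Hence $G(0,P)=\alpha\cdot 0-\delta=-\delta$, uniformly in $P$ and in the choice of $q$.

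The final step is to invoke positivity of the parameters. Since $\delta>0$ is the strictly positive pathogen inactivation rate, we have $G(0,P)=-\delta\neq 0$ for every $P$. Consequently $\dot P=P^*(-\delta)=-\delta P^*\neq 0$ whenever $P^*\neq 0$, so the equilibrium requirement $\dot P=0$ fails and no point with $S_v^*=0,\ P^*\neq 0$ can be an equilibrium. This contradicts the assumed existence of $E_4$, disposing of all three Holling responses simultaneously.

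There is no serious obstacle in this argument: the statement is essentially a one-line consequence of $f_q(0)=0$ and $\delta>0$. The only points that deserve care are confirming that the vanishing of $f_q$ at the origin holds uniformly across $q=0,1,2$, so that a single computation settles all three models, and noting that the conclusion rests solely on the strict positivity of $\delta$ rather than on any property of $F$ (consistent with the fact that $\dot S_v=S_v F$ vanishes automatically once $S_v=0$, regardless of $F$).
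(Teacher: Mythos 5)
Your proof is correct and takes essentially the same route as the paper: both argue that at a candidate point with $S_v^*=0$, $P^*\neq 0$, the equilibrium condition forces $\alpha f_q(0)-\delta=0$, which fails because $f_q(0)=0$ for all three functional responses and $\delta>0$. Your write-up is in fact slightly more careful than the paper's, since you explicitly verify $f_q(0)=0$ across $q=0,1,2$ and retain the replication factor $\alpha$ in $G$, which the paper's proof silently drops.
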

\begin{proof}
Suppose the system in \eqref{holling-specific}-\eqref{final model} has an equilibrium point $E_4 = \left( S_v^*,P^*\right)$ where $S_v^*=0$ and $P^*\neq 0$. Since $E_4$ is an equilibrium point, by Definition \ref{def} we have $P^*=0$, i.e.
\begin{align*}
	P^* G\left(S_v^*,P^* \right) &=P^* \left( f_p\left(S_v^* \right)  - \delta\right) =0\\
\text{Since }P^*\neq 0 &\implies	f_p\left(S_v^* \right)  - \delta = 0 \\
\text{Since } S_v^* = 0 &\implies -\delta = 0
\end{align*}
However, $\delta>0$ by definitions of the system. Therefore this is a contradiction.\\
\noindent
Hence, the system \eqref{holling-specific}-\eqref{final model} does not contain vector-free equilibrium point.
\end{proof}
\noindent By Theorem \ref{th:vf}, we validate the model structure by demonstrating that mathematically impossible scenarios are excluded under our formulation, thereby confirming the internal consistency and robustness of the model. The equilibrium points for the system are given in Table \ref{Table:eq points}. 
\begin{table}[ht!]
	\caption{ Equilibrium points of the model in \eqref{holling-specific} - \eqref{final model} with functional responses I, II and III. }
	\centering
	\begin{tabular}{lcc}
		Equilibrium type& 	\begin{tabular}[c]{@{}l@{}}Functional \\ Response\end{tabular} & Equilibrium point $\left( S_v^*,P^*\right) $ \\ \toprule
		\vspace{0.3cm}	
		Trivial ($E_1$)         &I, II, III     &   $\left( 0,0\right)$ \\
		\vspace{0.3cm}   
		Disease-free ($E_2$)    & I, II, III       & $\left( v_cK,0\right) $     \\ 
		\vspace{0.3cm}   
		\multirow{3}{*}{Endemic ($E_3$)}& I& $\left(\dfrac{\delta}{\alpha \beta_v v_c} , \dfrac{\mu_v}{\beta_v}\left(1-\dfrac{\delta}{\alpha \beta_v v_c^{2}K}\right)\right) $   \\  
		\vspace{0.3cm}
		&II& $\left(\dfrac{\delta}{\beta_v v_c (\alpha-\delta E_p )},\dfrac{\alpha \mu_v\left( K\beta_v v_c^2 (\alpha-\delta E_p)- \delta\right) }{K\beta_v^2v_c^2(\alpha-\delta E_p )^2} \right)$ \\
		\vspace{0.3cm}
		&III$^*$&$\left( \dfrac{\sqrt{\delta}}{\sqrt{\beta_v v_c (\alpha-\delta E_p )}}, \dfrac{\alpha \mu_v\left( Kv_c\sqrt{\beta_vv_c(\alpha-\delta E_p)}-\sqrt{\delta}\right) }{K\beta_v v_c (\alpha-\delta E_p)\sqrt{\delta}}\right) $ \\
		\vspace{0.3cm}
		\multirow{1}{*}{Vector-free ($E_4$)}& I,II, III& Does not exist (Theorem \ref{th:vf}).\\  
		\bottomrule
	\end{tabular}
	\parbox{\textwidth}{* Type III generates an alternative endemic equilibrium; however, it is not included due to its ecological infeasibility.}
	\label{Table:eq points}
\end{table}

%\subsection*{Stability analysis}

The stability analysis of the system incorporating the Holling type II functional response is presented in this section. To assess the local stability of equilibrium points, the Jacobian matrix of the system is derived and analyzed. Specifically, the signs of the determinant and trace of the Jacobian are used to determine the nature of the equilibria (see theorem \ref{thm: stability}). The corresponding stability analyses for systems with Holling type I and type III functional responses are provided in the supplementary material. Accordingly, the Jacobian matrix of the model with the functional response II is:\\

$$J=\begin{bmatrix}
v_c\mu_v\left(1-\dfrac{2S_v}{v_c K} \right)-\dfrac{\beta_v v_c P}{(1+\beta_v v_c E_p S_v)^2} & -\dfrac{\beta_v v_c S_v}{1+\beta_v v_c E_p S_v}\\
\dfrac{\alpha P}{(1+ \beta_vv_c E_p S_v)^2} & \dfrac{\alpha \beta_v v_c S_v}{1+ \beta_vv_c E_p S_v}-\delta
\end{bmatrix}$$

\begin{theorem} \label{thm: stability}
Consider the 2D nonlinear system
\[
\frac{dx}{dt} = f(x),
\]
where $f:\mathds{R}^2\rightarrow \mathds{R}^2$ is a smooth mapping. Let $x^*$ be an equilibrium point, i.e. $f(x^*)=0$. The equilibrium at $x^*$ is locally asymptotically stable if and only if the Jacobian matrix: $J=D f(x^*)$ has both eigenvalues with negative real parts, which is equivalent to, $\operatorname{tr}(J)<0 \quad \text{and} \quad \det(J)>0$.
\end{theorem}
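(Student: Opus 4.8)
The plan is to split the statement into two logically independent equivalences and establish each in turn. The first is the purely algebraic Routh--Hurwitz criterion for $2\times 2$ matrices: that both eigenvalues of the real matrix $J$ have negative real part if and only if $\operatorname{tr}(J)<0$ and $\det(J)>0$. The second is the dynamical content, namely Lyapunov's indirect (linearization) method, which asserts that the local asymptotic stability of $x^*$ for the nonlinear flow is determined by the spectrum of the Jacobian $J=Df(x^*)$.

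For the algebraic equivalence, I would start from the characteristic polynomial $p(\lambda)=\lambda^2-\operatorname{tr}(J)\,\lambda+\det(J)$, so that by Vieta's formulas $\lambda_1+\lambda_2=\operatorname{tr}(J)$ and $\lambda_1\lambda_2=\det(J)$. I would then case-split on the sign of the discriminant $\Delta=\operatorname{tr}(J)^2-4\det(J)$. When $\Delta\ge 0$ the eigenvalues are real, and both are negative exactly when their sum is negative and their product positive; when $\Delta<0$ they form a complex-conjugate pair $a\pm bi$ with common real part $a=\operatorname{tr}(J)/2$ and modulus-squared $a^2+b^2=\det(J)$, so negativity of the real part is equivalent to $\operatorname{tr}(J)<0$ together with $\det(J)>0$. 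Collecting the cases yields the stated trace--determinant conditions; this step is routine.

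For the dynamical equivalence, the forward direction (negative real parts $\Rightarrow$ local asymptotic stability) I would obtain from a quadratic Lyapunov function. Since $J$ is Hurwitz, the Lyapunov matrix equation $J^{\top}B+BJ=-I$ admits a unique symmetric positive-definite solution $B$; setting $V(x)=(x-x^*)^{\top}B(x-x^*)$ and writing $f(x)=J(x-x^*)+R(x)$ with $R(x)=o(\|x-x^*\|)$ by smoothness, a short computation gives $\dot V=-\|x-x^*\|^2+o(\|x-x^*\|^2)<0$ on a punctured neighborhood of $x^*$, establishing local asymptotic stability. The converse (local asymptotic stability $\Rightarrow$ negative real parts) I would argue contrapositively: if some eigenvalue has positive real part, the unstable-manifold / Chetaev argument forces $x^*$ to be unstable, contradicting asymptotic stability.

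The subtle point --- and the step I expect to be the real obstacle --- is the non-hyperbolic boundary case, where $J$ has an eigenvalue with zero real part (so $\det(J)=0$, or $\operatorname{tr}(J)=0$ with $\det(J)>0$). There linearization is genuinely inconclusive: $\dot x=-x^3$ is asymptotically stable yet has vanishing linear part, so the biconditional as literally stated can fail. I would therefore restrict attention to hyperbolic equilibria (no eigenvalue on the imaginary axis), under which the Hartman--Grobman theorem guarantees that the nonlinear flow is locally topologically conjugate to its linearization and the equivalence becomes exact. All equilibria analyzed subsequently for the Holling type II system are hyperbolic in the biologically relevant parameter regime, so this restriction costs nothing in the application.
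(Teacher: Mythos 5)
The paper offers no proof of this theorem at all: it is invoked as a standard fact (Lyapunov's indirect method combined with the two-dimensional Routh--Hurwitz criterion) and immediately applied to produce the stability conditions in the trace--determinant table, so there is nothing in the paper to compare your argument against step by step. Your proposal is correct, and its decomposition --- the Vieta/discriminant case analysis for the algebraic equivalence, the Lyapunov equation $J^{\top}B+BJ=-I$ for sufficiency, and a Chetaev/unstable-manifold argument for instability when some eigenvalue has positive real part --- is the standard textbook route. More importantly, you caught something the paper's statement glosses over: the biconditional as literally written is false without a hyperbolicity assumption. Concretely, the planar system $\dot{x}=-x^{3}$, $\dot{y}=-y$ has a locally asymptotically stable equilibrium at the origin with $\operatorname{tr}(J)=-1<0$ but $\det(J)=0$, so local asymptotic stability of the nonlinear system does \emph{not} imply the stated trace--determinant conditions; only the sufficiency direction holds in general. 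Your repair --- restricting to hyperbolic equilibria, where Hartman--Grobman (or your Chetaev argument) makes the equivalence exact --- is the correct one, and your observation that the equilibria actually analyzed in the Holling type II system are hyperbolic in the biologically relevant parameter regime means the paper's downstream use of the theorem is unaffected. One nuance worth recording: for the linear system $\dot{x}=Jx$ the ``if and only if'' is true exactly as stated; it is only the passage to a nonlinear equilibrium via linearization that demands hyperbolicity, which is precisely where your counterexample lives.
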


%\newpage
%\begin{theorem} \label{thm: stability}
%Consider the 2D linear system
%\[
%\frac{d\mathbf{x}}{dt} = A\mathbf{x},
%\]
%where \(A\) is a \(2 \times 2\) matrix with trace \(\tau = \operatorname{tr}(A)\) and determinant \(\Delta = \det(A)\). The equilibrium at the origin is asymptotically stable if and only if
%\[
%\tau < 0 \quad \text{and} \quad \Delta > 0.
%\]
%\end{theorem}
%The stability of the 2D system is discussed using Theorem \ref{thm: stability}. 
Accordingly, the stability conditions of the three equilibrium points of the system is summarized in Table \ref{eql}.

%\begin{table}[ht!]
%\centering
% \caption{\label{eql} The stability conditions of the equilibrium points with trace determinant respectively for the model in \eqref{holling-specific} - \eqref{final model}  with functional response II}
% \begin{tabular}{ c c c}
%Equilibrium& tr$(J) <0$&$\det(J) >0$\\
%\toprule
%$E_1$	&	$v_c<\dfrac{\delta}{\mu_v}$	&	$-v_c\mu_v\delta>0$\\
%$E_2$&	$-\beta_v\mu_v E_p Kv_c^3+\beta_vK(\alpha-\delta E_p)v_c^2-\mu_v v_c -\delta<0$& $v_c^2<\dfrac{\delta}{\beta_v K (\alpha-\delta E_p)}$\\
% $E_3$	&	$v_c^2<\dfrac{\alpha+\delta E_p}{K\beta_vE_p(\alpha-\delta E_p)}$ & $v_c^2>\dfrac{\delta}{K\beta_v(\alpha-\delta E_p)}$\\
% \bottomrule
% \end{tabular}

\begin{table}[ht!]
\centering
 \caption{\label{eql} The stability conditions of the equilibrium points with trace determinant respectively for the model in \eqref{holling-specific} - \eqref{final model} for $\delta, v_c, \beta_v>0$ with functional response II}
 \begin{tabular}{ c l}
Equilibrium& tr$(J) <0$ and $\det(J) >0$\\
\toprule
$E_1\left( 0,0\right) $	&	$v_c<0$ (violates the domain of $v_c$)\\%	&	$-v_c\mu_v\delta>0$\\
$E_2\left(S_v^*,0 \right) $	&	$0<v_c<\dfrac{\delta}{\beta_v K (\alpha-\delta E_p)}$\\ %$-\beta_v\mu_v E_p Kv_c^3+\beta_vK(\alpha-\delta E_p)v_c^2-\mu_v v_c -\delta<0$& \\
 $E_3\left( S_v^*,P^*\right) $	&	$\dfrac{\delta}{K\beta_v(\alpha-\delta E_p)}<v_c<\dfrac{\alpha+\delta E_p}{K\beta_vE_p(\alpha-\delta E_p)}$\\% & $v_c^2>\dfrac{\delta}{K\beta_v(\alpha-\delta E_p)}$\\
 \bottomrule
 \end{tabular}

\end{table}
	
\noindent	The model system in \eqref{holling-specific} - \eqref{final model} is given below specifically for functional response II.% Let functional response of the model be Holling type II. Then the model takes form of:
	\begin{subequations}
		\begin{align}
			\dfrac{dS_v}{dt}&=v_c \mu_vS_v\left(1-\dfrac{S_v}{v_cK}\right)-\dfrac{\beta_v v_cS_vP}{1+\beta_v v_cE_pS_v} \label{eq:2_a}\\
			\dfrac{dP}{dt}&=\dfrac{\alpha\beta_v v_cS_vP}{1+\beta_v v_cE_pS_v} - \delta P \label{eq:2_b}
		\end{align}
	\end{subequations}
	
	\vspace{0.25cm}
	\begin{definition}
	An equilibrium point ${x}^*$ of a dynamical system is \textit{asymptotically stable} if for every solution ${x}(t)$ starting sufficiently close to ${x}^*$, we have
\[
\lim_{t \to \infty} {x}(t) = {x}^*.
\]
	\end{definition}
	
	\vspace{0.25cm}
	
	\begin{definition}
		An equilibrium point of a system is globally stable if it is stable for almost all initial conditions, not just those that are close to it.
	\end{definition}
	
	\vspace{0.25cm}
	
	\begin{theorem}
		Disease-free equilibrium of system in \eqref{eq:2_a}-\eqref{eq:2_b} is globally stable when $v_c^2< \dfrac{ \delta}{\beta_v K(\alpha-\delta E_p)}$. 
	\end{theorem}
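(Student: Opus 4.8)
The plan is to exhibit a single Goh--Volterra-type Lyapunov function and close the argument with LaSalle's invariance principle, which delivers exactly the stated threshold in one stroke. Take
\[
V(S_v,P)=S_v-v_cK-v_cK\ln\frac{S_v}{v_cK}+\frac{c}{\alpha}\,P ,
\]
with a weight $c>0$ to be fixed. Since $\alpha-\delta E_p>0$ (needed for the endemic structure, and—once I set $c=\alpha/(\alpha-\delta E_p)$—for the coefficient of $P$ to be positive), $V$ is nonnegative on the open positive quadrant, vanishes only at $E_2=(v_cK,0)$, and is proper there (it blows up as $S_v\to0^+$, as $S_v\to\infty$, and as $P\to\infty$), so its sublevel sets are compact and trajectories issuing from positive data—positive for all time by Theorem \ref{th:posi}—stay precompact.

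First I would differentiate $V$ along \eqref{eq:2_a}--\eqref{eq:2_b}. The prey part $\left(1-\tfrac{v_cK}{S_v}\right)\dot S_v$ splits into the logistic contribution, which collapses to the negative-definite term $-\tfrac{\mu_v}{K}(S_v-v_cK)^2$, plus a predation cross term; adding $\tfrac{c}{\alpha}\dot P$ then leaves
\[
\dot V=-\frac{\mu_v}{K}(S_v-v_cK)^2+P\,\Phi(S_v),\qquad \Phi(S_v)=\frac{\beta_v v_c\big[(c-1)S_v+v_cK\big]}{1+\beta_v v_c E_p S_v}-\frac{c\delta}{\alpha}.
\]
The crux is to choose $c$ so that the $S_v$-dependence of $\Phi$ disappears: setting $c=\alpha/(\alpha-\delta E_p)$ causes the terms proportional to $S_v$ to cancel once $\Phi$ is written over a common denominator, collapsing it to the $S_v$-free sign
\[
\Phi(S_v)=\frac{\beta_v v_c^2K(\alpha-\delta E_p)-\delta}{(\alpha-\delta E_p)\big(1+\beta_v v_c E_p S_v\big)} .
\]
Hence $\Phi(S_v)<0$ for every $S_v\ge0$ precisely when $\beta_v v_c^2K(\alpha-\delta E_p)<\delta$, i.e. $v_c^2<\delta/\big(\beta_v K(\alpha-\delta E_p)\big)$, the stated hypothesis. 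Under this hypothesis $\dot V\le0$ throughout the positive quadrant, with equality forcing $S_v=v_cK$ and $P=0$ simultaneously; thus the largest invariant subset of $\{\dot V=0\}$ is the singleton $\{E_2\}$, and LaSalle's invariance principle yields convergence of every positive trajectory to $E_2$, which is the asserted global stability.

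The main obstacle is discovering the weight $c$: I would find it not by guessing but by demanding that the $S_v$-linear part of the predator's gain exactly offset the prey's loss term, a single linear condition that pins $c=\alpha/(\alpha-\delta E_p)$ and simultaneously explains why the threshold carries the factor $(\alpha-\delta E_p)$ rather than $\alpha$—the naive choice $c=1$ yields only the strictly stronger bound $v_c^2\le\delta/(\alpha\beta_v K)$. A secondary, routine point is verifying properness of $V$ and precompactness of orbits so that LaSalle applies on the whole open quadrant; a cheaper substitute, if preferred, is to drop the predation term in \eqref{eq:2_a} to obtain $\dot S_v\le v_c\mu_v S_v(1-S_v/(v_cK))$ and hence $\limsup_{t}S_v\le v_cK$ by comparison, after which the monotone per-capita predator growth gives $\dot P\le P\big(\tfrac{\alpha\beta_v v_c^2K}{1+\beta_v v_c^2E_pK}-\delta\big)$, whose bracket is negative under the same hypothesis and forces $P\to0$.
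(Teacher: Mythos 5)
Your proof is correct, and it takes a genuinely different route from the paper's. The paper argues by elementary comparison: it notes $S_v$ decreases whenever $S_v>v_cK$, splits into cases according to whether the trajectory ever reaches $S_v=v_cK$, bounds $\tfrac{dP}{dt}\le \delta\bigl(\tfrac{\alpha\beta_v v_c^2K}{\delta(1+\beta_v v_c^2E_pK)}-1\bigr)P$ to force $P\to 0$ exponentially under the hypothesis, and then squeezes $\liminf_{t\to\infty}S_v$ and $\limsup_{t\to\infty}S_v$ (using $P\le\epsilon$ eventually) to get $S_v\to v_cK$. Your weighted Goh--Volterra function with $c=\alpha/(\alpha-\delta E_p)$ replaces all of that case analysis with one computation; I verified the cancellation of the $S_v$-linear terms and the collapse of $\Phi$ to a sign that is negative exactly under the stated threshold, and the LaSalle step is sound since the sublevel sets of $V$ in $\{S_v>0,\,P\ge 0\}$ are compact and positively invariant. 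Your approach buys brevity and something the paper never actually establishes: genuine Lyapunov stability of $E_2$, not just attractivity. The paper's approach buys portability: the same comparison scheme transfers verbatim to the Type I and Type III responses (as the paper asserts), whereas your cancellation is specific to Type II---repeating it for Type III leaves the residual numerator $\alpha\beta_v v_c^2KS_v-c\delta$, which changes sign for large $S_v$, so there one would additionally need an absorbing-region bound such as $\limsup_{t\to\infty}S_v\le v_cK$ before the Lyapunov argument closes. Two minor caveats: your construction needs $\alpha-\delta E_p>0$ (else $c<0$), which is harmless because the stated hypothesis is vacuous otherwise; and your ``cheaper substitute'' at the end is loose as written, since $\limsup_t S_v\le v_cK$ only yields the $\dot P$ bound after finite time and up to an $\epsilon$---exactly the bookkeeping the paper's proof carries out.
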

	
	\begin{proof}
 Given that $\mu_v, v_c,\beta_v >0$ and $\dfrac{dS_v}{dt} <0$ whenever $S_v > v_c K$, the solution $S_v(t)$ is a decreasing whenever $S_v(0)> v_c K$. 
		 
		\noindent Assume for some $t_0>0$, $S_v(t_0)=v_c K$. Then $\dfrac{dS_v}{dt}(t_0) \leq 0$ and $S_v \leq v_c K$ for all $t>t_0$. Therefore, for increasing function $f_1(S_v)$, 
		\begin{equation}
			\alpha f_1(S_v)-\delta = \dfrac{\alpha\beta_v v_cS_v}{1+\beta_v v_cE_pS_v} - \delta  \leq \dfrac{\alpha\beta_v v_c^2 K}{1+\beta_v v_c^2E_pK} - \delta =\delta \left(\dfrac{\alpha\beta_v v_c^2 K}{\delta(1+\beta_v v_c^2E_pK)} -1 \right) \label{inequaity}
		\end{equation}
		$\forall\, t> t_0$. The inequality in \eqref{inequaity} and Eq. \eqref{eq:2_b} yields, 
		\begin{align*}
			\dfrac{dP}{dt} \leq \delta \left(\dfrac{\alpha\beta_v v_c^2 K}{\delta(1+\beta_v v_c^2E_pK)} -1 \right)P 
		\end{align*}
		and consequently,
		\begin{equation*}
			0\leq P \leq c_1 \exp \left({\delta\left(\frac{\alpha\beta_v v_c^2 K}{\delta(1+\beta_v v_c^2E_pK)} -1 \right)(t-t_0)}\right)
		\end{equation*}
		for all $t>t_0$ for some $c_1 \in \mathbb{R}$.
		Suppose $v_c^2< \dfrac{ \delta}{\beta_v K(\alpha-\delta E_p)}$. Then $\lim_{t \to \infty}P = 0$ for any initial condition of $P$. 
		
		\noindent Now let us assume $\exists\,$ another $t_0$ such that $S_v(t_0) <v_c K$. From equation \eqref{eq:2_a}, we have 
		\begin{equation*}
			\dfrac{dS_v}{dt} \leq v_c \mu_vS_v\left(1-\dfrac{S_v}{v_cK}\right) \leq v_c^2 K \mu_v \left(1-\dfrac{S_v}{v_cK}\right)
		\end{equation*}
		for $t \geq t_0$. Integrating the resulting inequality, 
		\begin{equation*}
			S_v(t) \leq v_c K+ c_2 e^{-v_c\mu_v (t-t_0)},
		\end{equation*} for $t\geq t_0$ for some $c_2 \in \mathbb{R}$. Hence $\limsup_{t \to \infty} S_v(t) \leq v_c K$. With $\lim_{t \to \infty} P(t)=0 $, for every $\epsilon>0$ there exists $t'>t_0$ such that $P(t) \leq \epsilon$ whenever $t>t'$. For an arbitrary $\epsilon$ we have,
		
		\begin{equation*}
			\dfrac{dS_v}{dt} \geq v_c \mu_vS_v\left(1-\dfrac{S_v}{v_cK}\right)-\dfrac{\beta_v v_cS_v\epsilon}{1+\beta_v v_cE_pS_v} \geq v_c \mu_vS_v\left(1-\dfrac{S_v}{v_cK}\right) - \beta_v v_cS_v\epsilon
		\end{equation*}
		for $t\geq t'$. Integrating the resulting inequality, 
		\begin{equation*}
			S_v(t) \geq \dfrac{v_c \mu_v - \epsilon \beta_v v_c}{\dfrac{\mu_v}{K}+c_3 e^{-(v_c\mu_v-\epsilon \beta_vv_c)(t-t')}}
		\end{equation*} 
		for $t \geq t'$ for some $c_3 \in \mathbb{R}$ and therefore,
		\begin{equation}
			\liminf_{t \to \infty} S_v(t) \geq \dfrac{Kv_c \mu_v -K\epsilon \beta_v v_c}{\mu_v} \label{inequality2}
		\end{equation}
		  for all $\dfrac{\mu_v}{\beta_v} >\epsilon > 0$. Since \eqref{inequality2} holds for all $\dfrac{\mu_v}{\beta_v} >\epsilon > 0$, $\liminf_{t \to \infty} S_v(t) \geq v_cK$.
		
		\noindent Hence $\lim_{t \to \infty} S_v$ exist from $\liminf_{t \to \infty}S_v \geq v_cK$ and $\limsup_{t \to \infty} S_v \leq v_cK$,
		\begin{equation*}
			\lim_{t \to \infty} S_v = v_cK
		\end{equation*}
		for $S_v(t_0) < v_cK$ for any $t_0 >0$.
		
		\noindent Suppose that there is no such $t_0$ where $S_v(t_0)=v_c K$. Since $\nexists$ such $t_0$ and $S_v(t)$ is a decreasing function, $S_v(t) > v_c K$. Then $S_v(t)$ must converges to either $v_c K$ or some $a\in \mathbb{R}$ s.t. $a> v_cK$. Suppose $\lim_{t \to \infty} S_v(t)=a$. Then $\lim_{t \to \infty} \dfrac{dS_v}{dt}=0$ and from RHS of \eqref{eq:2_a} we get,
		\begin{equation}
			\lim_{t \to \infty}\dfrac{\beta_v v_cS_vP}{1+\beta_v v_cE_pS_v}=\lim_{t \to \infty}v_c \mu_vS_v\left(1-\dfrac{S_v}{v_cK}\right)= v_c \mu_va\left(1-\dfrac{a}{v_cK}\right). \label{limit}
		\end{equation}
		However the limit in \eqref{limit} is negative for $a> v_c K$. This is a contradiction by theorem 3.1 ($S_v(t), P(t)>0$ for all positive initial conditions). Hence $S_v(t)$ does not converge to $a$ but definitely converges to $v_c K$. With $\lim_{t \to \infty} S_v(t) = v_c K$, for every $\epsilon>0$, $\exists\, t'$ s.t. $\left| S_v(t)-v_c K\right| < \epsilon$ whenever $t> t'$. For any arbitrary $\epsilon>0$ we have $S_v(t) < \epsilon + v_cK$ and,
		\begin{equation}
			\dfrac{dP}{dt} \leq \delta\left( \dfrac{\alpha\beta_v v_c(v_cK+\epsilon)}{\delta(1+\beta_v v_cE_p(v_cK+\epsilon))} - 1 \right) 
		\end{equation}
		for all $t> t'$ and consequently,
		\begin{equation}
			0 \leq P \leq c_4 \exp \left( \delta\left( \dfrac{\alpha\beta_v v_c(v_cK+\epsilon)}{\delta(1+\beta_v v_cE_p(v_cK+\epsilon))} - 1 \right)(t-t') \right) \label{inequality 3}
		\end{equation}
		for all $t>t'$ for some $c_4\in \mathbb{R}$. Since inequality \eqref{inequality 3} holds for all $\epsilon >0$, 
		\begin{equation}
			0\leq P \leq c_4 \exp \left({\delta\left(\frac{\alpha\beta_v v_c^2 K}{\delta(1+\beta_v v_c^2E_pK)} -1 \right)(t-t')}\right).
		\end{equation}
		Therefore $\lim_{t \to \infty}P = 0$ for any initial condition of $P$.
		
		Finally, we can conclude that for any initial condition, the trajectories $S_v(t)$ and $P(t)$ converge to $v_c K$ and $0$, respectively. Therefore, by the definition of global stability, disease-free equilibrium of the system in \eqref{eq:2_a}-\eqref{eq:2_b} is globally stable whenever $v_c^2< \dfrac{ \delta}{\beta_v K(\alpha-\delta E_p)}$.\
	\end{proof} 
	\noindent
	Global stability of the disease-free equilibrium points of the models with functional response I and III can be derived following the same logic. Thus we can conclude that disease-free equilibrium is globally stable for our model with their respective local conditions given in Table \ref{eql}. Although the endemic equilibrium is locally stable when $\dfrac{\delta}{K\beta_v(\alpha-\delta E_p)}<v_c<\dfrac{\alpha+\delta E_p}{K\beta_vE_p(\alpha-\delta E_p)}$, %$v_c^2> \frac{ \delta}{\beta_v K(\alpha-\delta E_p)}$, 
	making a statement about its globally stability is not direct. Nevertheless, endemic equilibrium being the only remaining locally stable equilibrium of the system, the likelihood of it being global stable is very high. However it must be highlighted, due to the nonlinear nature of the system, the existence of periodic orbits and chaotic dynamics of the endemic equilibrium cannot be ruled out without a thorough analysis.

\section{Numerical Results and Discussion}
\subsection{Endemic equilibrium and $v_c$}
In this section, we present the numerical results of the system in \eqref{holling-specific} - \eqref{final model}. In that we used vector mortality rate ($\mu_v$) to be $1/6$ \citep{VDCI}, transmission rate from host to vector ($\beta_v$) to be $0.375$ \citep{G_12}, and extrinsic incubation period ($E_p$) to be 2 \citep{G_96}. We show the stability of equilibrium points in phase portraits (see Fig. \ref{fig:E3_track}). We exclude $E_1$ from the discussion due to its lack of relevance and importance to the biological problem under investigation. For all three functional responses, the system undergoes a transcritical bifurcation (see Definition \ref{bif_def}) at which the stability of the interior equilibrium $E_3$ and the boundary equilibrium $E_2$ is exchanged. At this critical point, the two equilibria coincide, and their stability properties are reversed, signaling a fundamental qualitative shift in system dynamics. This transition reflects a change from a disease-free regime to an endemic regime, highlighting how subtle changes in parameters governing vector–pathogen interactions can fundamentally alter population-level disease outcomes. Notably, as $v_c$ increases beyond this threshold, the interior equilibrium $E_3$ exhibits a rapid and pronounced displacement toward the $P-$ axis, accompanied by a retreat from the $S_v-$ axis. This trajectory reveals that higher $v_c$ sharply amplifies pathogen replication by enabling infection of a larger fraction of the vector population. Consequently, even modest increases in $v_c$ beyond the bifurcation point can significantly elevate pathogen load, reinforcing endemic persistence and demonstrating the system’s heightened sensitivity to $v_c$. \\

For a Type-I functional response ($q = 0$), endemicity emerges once $v_c$ exceeds $v_c = 0.3651$, corresponding to a minimum transmission efficiency of 36.51\%. Increasing the nonlinearity to Type II ($q = 1$) raises this threshold to 38.49\%, while a Type-III response requires 52.91\% efficiency to maintain endemic levels. Notably, the transition from Type II to III nearly doubles the transmission efficiency required, highlighting how nonlinear vector–pathogen interactions can substantially elevate the barrier to disease persistence. Functional response III represents intensified competition for resources arising from an increased number of susceptible vectors under favorable environmental conditions. As shown in Fig.~\ref{fig:E3_track}, this scenario leads to a stable endemic state sustained by a relatively small vector population (evident from the displacement of the \(E_3\) trajectory along the x-axis), yet it generates a higher pathogen load compared to the Type~I and Type~II cases. Consequently, when vector abundance rises following favorable weather, the infection intensity can escalate even with a smaller active vector pool. Critically, if resource availability---such as human hosts---also increases proportionally with vector proliferation, the endemic intensity may amplify dramatically, heightening the risk of large-scale outbreaks.\\

\begin{definition}\label{bif_def}
Consider a dynamical system depending smoothly on a parameter $\lambda \in \mathbb{R}$. Suppose that at a critical value $\lambda = \lambda_0$ two equilibrium branches intersect. 
If, as $\lambda$ passes through $\lambda_0$, the stability of these equilibria is exchanged, then the system undergoes a transcritical bifurcation at $\lambda = \lambda_0$. In this case, one equilibrium branch is stable and the other unstable for $\lambda < \lambda_0$, while for $\lambda > \lambda_0$ their stability is reversed.\\
\end{definition}

 \begin{figure}[ht!]
 	\centering
 	\includegraphics[width=\linewidth]{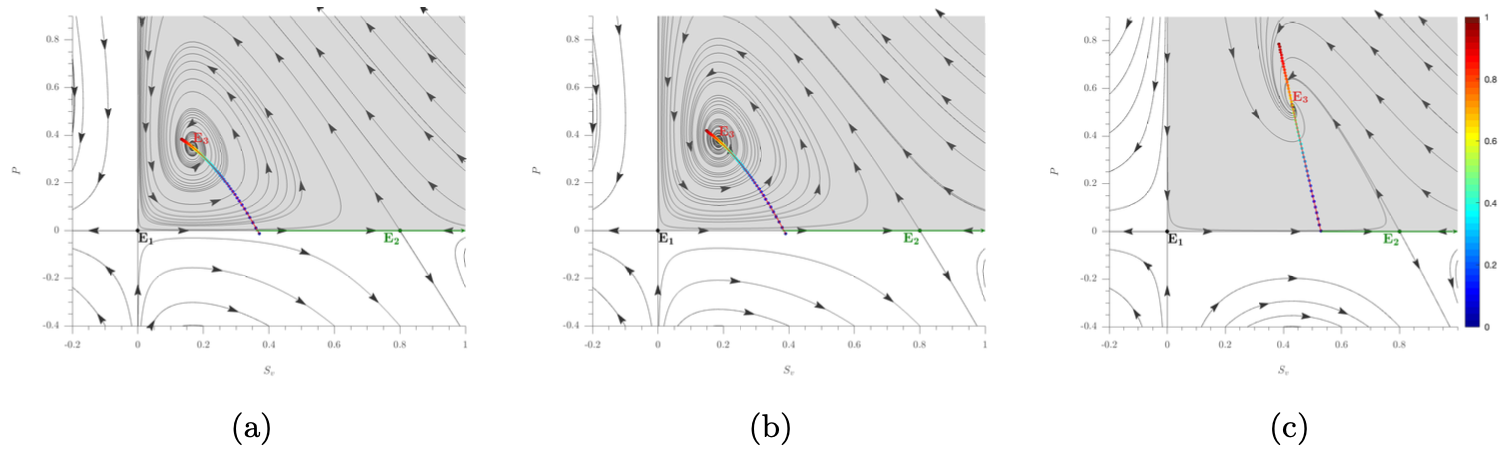}
 	\caption{\label{fig:E3_track} The stability of $E_3$ is illustrated for functional responses (a) Type I ($q=0$), (b) Type II ($q=1$), and (c) Type III ($q=2$) in model \eqref{holling-specific} - \eqref{final model}. The trajectories of $E_3$ and $E_2$ are traced as $v_c$ increases beyond their respective transcritical bifurcation thresholds; $v_c = 36.51\%,38.49\%,\text{ and }52.91\%$ respectively. Values for other parameters were set at $\mu_v=0.166, \beta_v=0.375, E_p=2, K=1, \delta=0.1$, $\alpha=2$ (references of the literature are given in Table \ref{Table: parameter definition}).}
 \end{figure}
 
  \begin{figure}[ht!]
 	\centering
 	\includegraphics[width=0.5\linewidth]{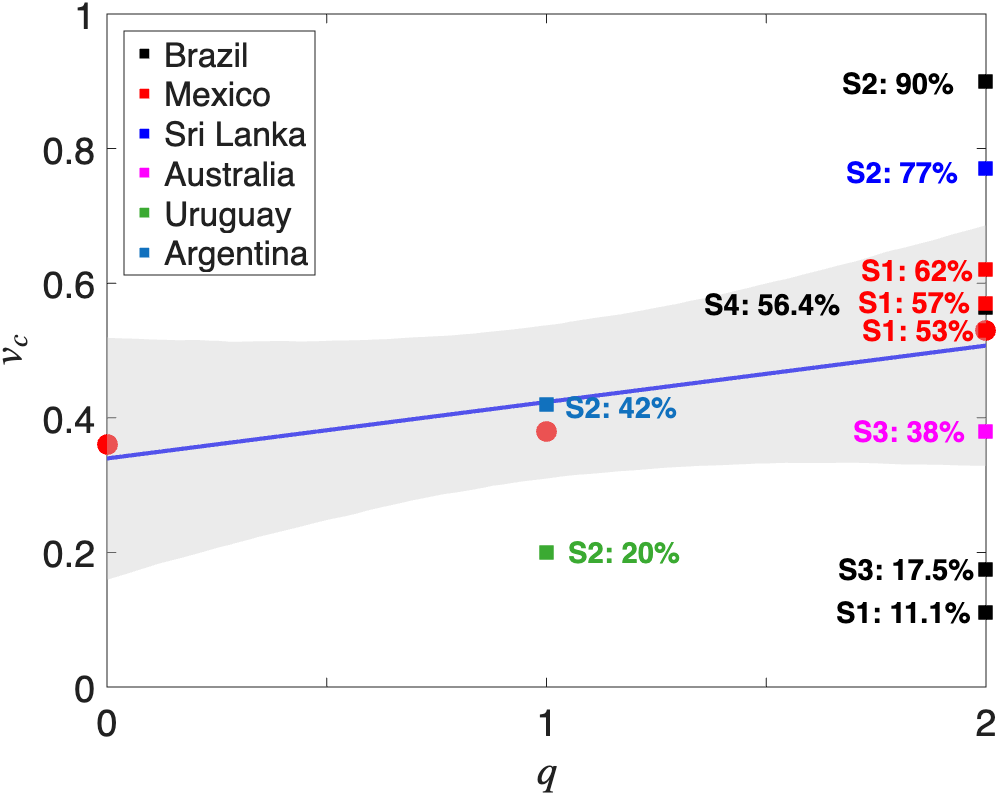}
 	\caption{\label{fig:q_vc} Variation in the vector competence ($v_c$) threshold required for endemicity across Holling functional response types, illustrating how resource-dependent consumption rates elevate the competency level needed for a stable endemic state. The red points are the $v_c$ thresholds from bifurcation and the blue solid like is a linear fit ($v_c = 0.3382 + 0.0841q$) with shaded area depicting uncertainty quantification (95\% credible interval). The $v_c$ with their serotype are marked at $q=1$ for sub-tropical countries and $q=2$ for tropical countries, where dengue is prevalent.}
 \end{figure}

Figure \ref{simulations_compare} illustrates the temporal dynamics of the two populations under functional responses I–III. The results show that as the functional response transitions from Type I to Type III, the resource abundance increases while consumers' top-down pressure increases and the system reaches equilibrium more rapidly. This accelerated convergence reflects the vector population’s adaptive response to environmental or biological stressors—under resource limitation, mosquitoes swiftly adjust to coexist with these pressures, balancing their interaction with pathogens and thereby hastening endemic establishment \citep{kalinkat2023empirical}. These results underscore that resource scarcity among abundant susceptible vectors can intensify infection levels. Paradoxically, climatic conditions such as rainy seasons— favorable for mosquito proliferation—may further enhance viral transmission capacity by accelerating adaptive responses within the vector population.\\

\begin{figure}[ht!]
\centering
\begin{subfigure}{0.3\linewidth}
			\includegraphics[width=\linewidth]{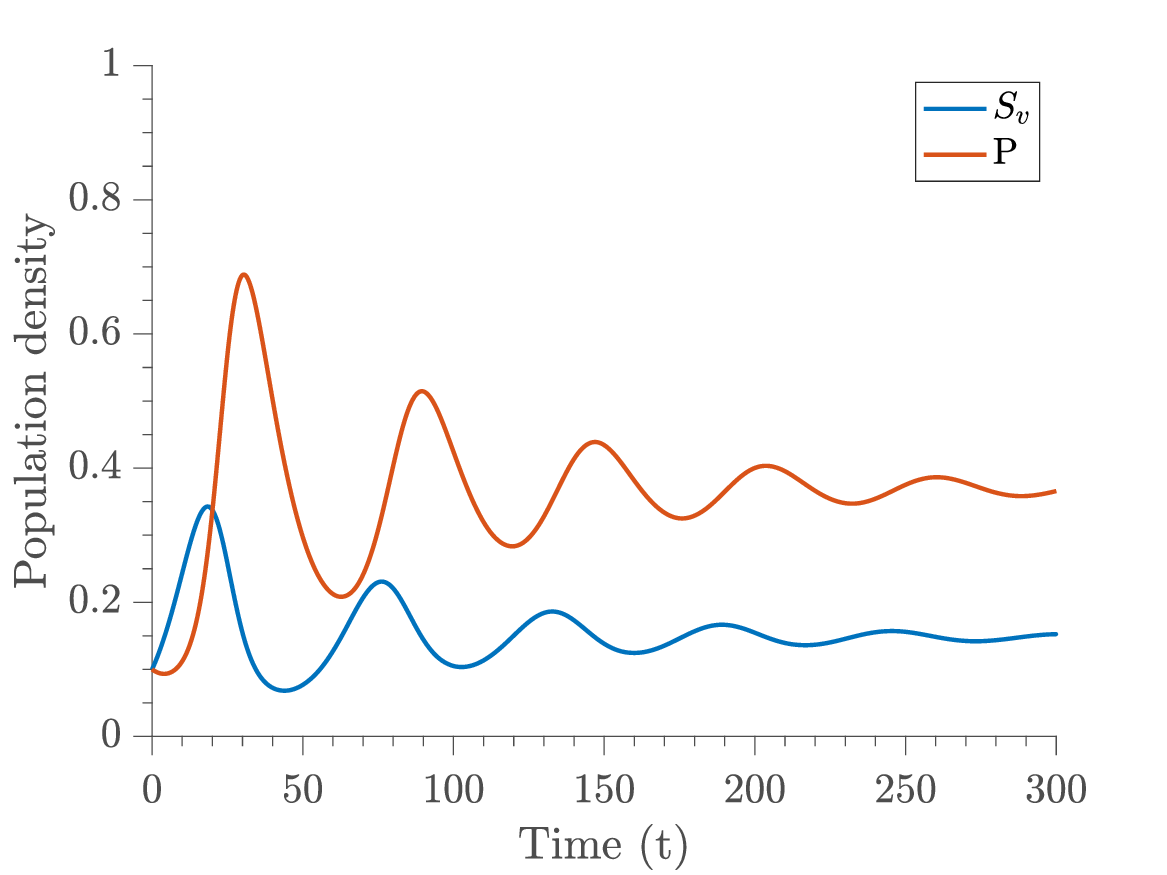}
			\caption{}
		\end{subfigure}
		\begin{subfigure}{0.3\linewidth}
			\includegraphics[width=\linewidth]{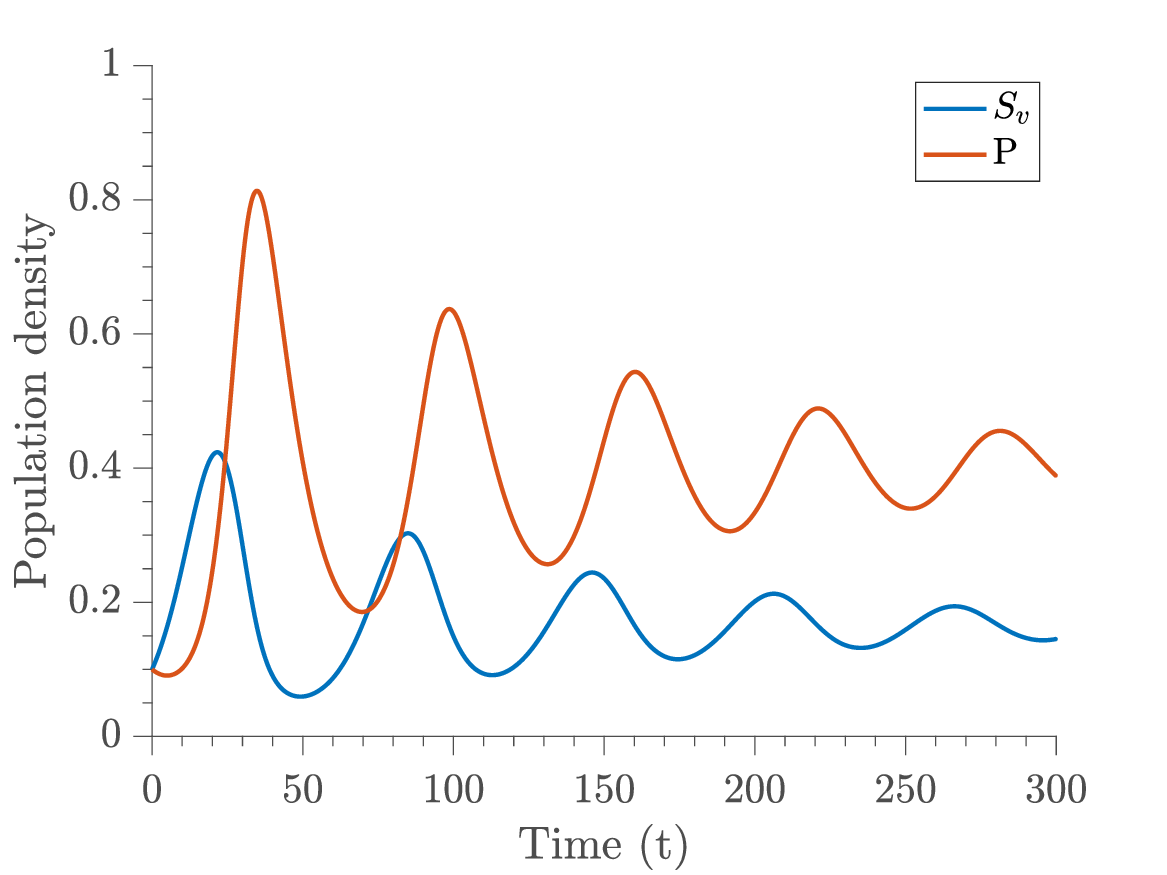}
			\caption{}
		\end{subfigure}
		\begin{subfigure}{0.3\linewidth}
			\includegraphics[width=\linewidth]{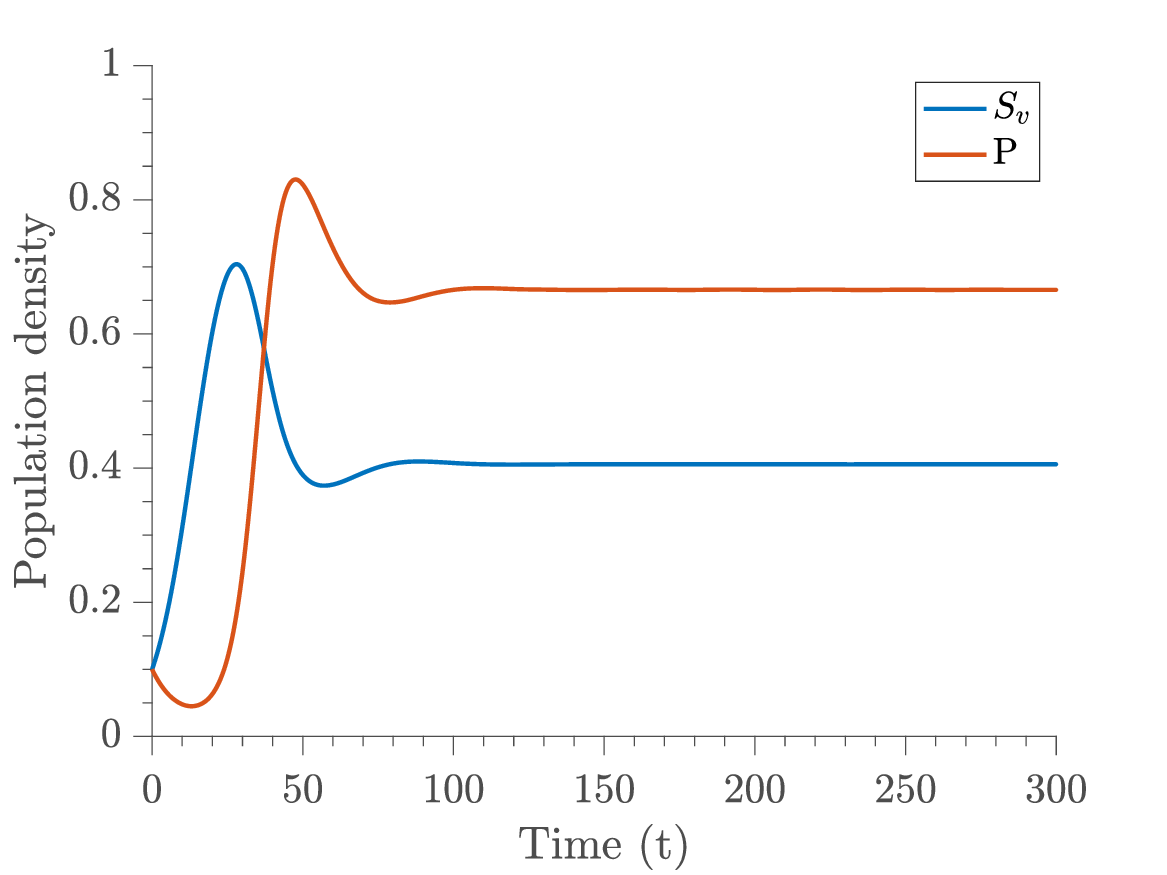}
			\caption{}
		\end{subfigure}	
	\caption{\label{simulations_compare} Comparison of simulations of the system \eqref{holling-specific}-\eqref{final model}. Model simulations for the system for (a) $q=0$, (b) $q=1$, and (c) $q=2$. The parameter values used are $v_c=0.9, \mu_v=0.166, \beta_v=0.375, E_p=2, K=1, \delta=0.1$ and $\alpha=2$. }
\end{figure}

\subsection{Pathogen inactivation rate and $v_c$}
As illustrated in Fig.~\ref{FR_stability_change}, the two-parameter bifurcation diagram delineates the combinations of $\delta$ and $v_c$ required to sustain an endemic state. It marks the critical $(v_c, \delta)$ points where system stability transitions from the disease-free equilibrium ($E_2$), represented by the green region, to the endemic equilibrium ($E_3$), represented by the red region, thereby underscoring the coupled influence of pathogen inactivation and $v_c$ in maintaining endemicity. \\

Vectors possess an innate immune system that lacks adaptive capability, implying that their maximum energetic investment in immune responses to pathogen pressure is inherently limited. Consequently, the only viable evolutionary pathway available to them lies in modulating $v_c$, which governs the efficiency of pathogen acquisition and transmission. Therefore, any evolutionary adaptation to environmental or pathogen pressures must occur through optimizing $v_c$. The outcomes of this adaptation are illustrated in Fig.~\ref{fig:E3_track}, where the trajectory of $E_3$ shifts markedly across the $S_v$–$P$ plane as $v_c$ increases, reflecting enhanced pathogen persistence. Since there is limited empirical evidence on feasible values of the pathogen inactivation rate ($\delta$), it remains uncertain whether vectors in a given region can achieve the levels of $\delta$ required to counterbalance rising $v_c$, potentially leading to outbreaks. This constraint highlights a fundamental biological trade-off: vectors can evolve higher competence but not stronger immune capacity, thereby defining the limits of their evolutionary response to pathogen-driven selection and contributing to instability in disease transmission dynamics.

\begin{figure}[ht!]
	\centering
				\includegraphics[width=0.7\textwidth]{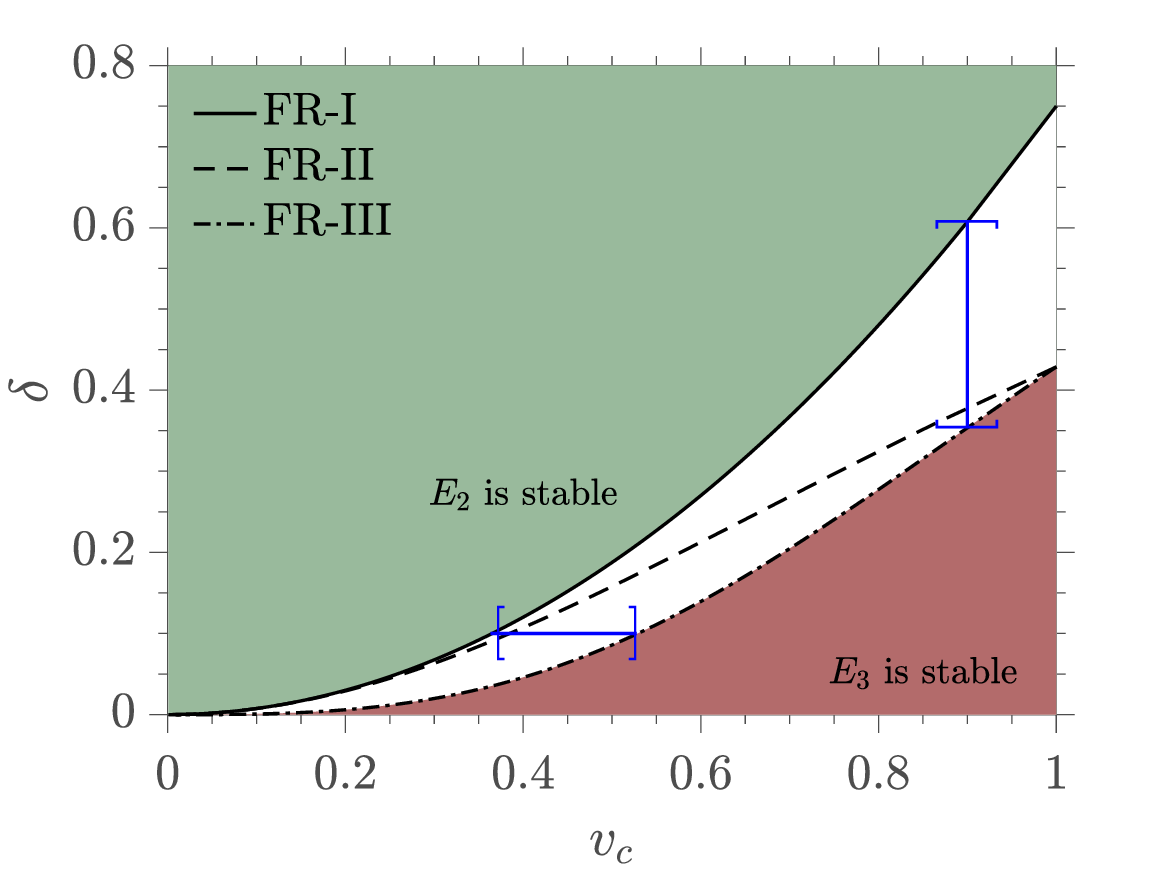}
	\caption{\label{FR_stability_change} Two parameter bifurcation diagram: The range of $\delta$ and $v_c$ for which the equilibrium points $E_2$ (disease-free) and $E_3$ (endemic) are stable for all functional responses. Red shows the endemic region and green shows the disease-free region. The white area shows the uncertainty of choosing between functional responses. }
\end{figure}

The white region at which $E_2$ changes to $E_3$ explains the uncertainty of functional response choice - exponential growth rate to resource-limited growth rate triggering interspecies competition. For a fixed pathogen inactivation rate of $\delta = 0.1$, the transition from the disease-free to the endemic state is highly sensitive, with the corresponding uncertainty in $v_c$ ranging from $36.51\%$ to $52.91\%$. Conversely, when $v_c$ is fixed at $0.9$, the uncertainty in $\delta$ lies within the interval $\left(35.43\%, 60.80\%\right)$. These ranges reveal that small variations in either parameter can decisively alter system stability, indicating that even marginal changes in vector immunity or transmission efficiency may trigger or suppress endemic persistence.\\

%When the vector growth rate moves from exponential to resource-limited, the pathogen inactivation rate, $\delta$ decreases signaling the hindered performance of the vector immunity system. This is possible because when the resources are limited the vector survives by prioritizing reproduction over immunity, although severe limitation may impair both functions. In summary, significant environmental changes that drive sudden population expansion of vectors and triggering interspecies competition for survival suggest that vectors may adapt biologically—either through enhanced immune defenses, altered vector competence, or both. However, it also shows the superiority of $v_c$ over immune defenses suggesting which aspect to focus on during vector control methods. Fig. \ref{FR_stability_change} indicates that for a given $v_c$ , a minimum pathogen inactivation rate ($\delta$) is required to move from endemic to disease-free. Identifying this threshold is vital for vector control programs, as interventions should aim to raise $\delta$ beyond this level. This insight can also guide the choice of environmental strategies through functional responses I–III.\\

%Additionally, the uncertainty intervals offer valuable insights for vector control strategies aimed at facilitating the transition from the $E_2$ state to the $E_3$ state and eradicating the disease.\\

%validation
From literature studies conducted where dengue is prevalent, we extracted $v_c$ values corresponding to different regions as well as different serotypes (see Fig. \ref{fig:q_vc}). The summary of these $v_c$ values can be found in \citep{chathurangika2024determining} and references therein. Based on these values comparing against the two-parameter bifurcation diagram in Fig. \ref{FR_stability_change}, we argue the level of $\delta$, assuming an endemic state. For instance, in Brazil, a tropical country characterized by frequent dengue outbreaks, $v_c$ exhibits a clear increase across serotypes. The persistent endemicity observed at these $v_c$ levels suggests a critical biological constraint: mosquitoes are unable to allocate more than approximately 38\% of their resources toward immune defense. This limitation implies that further increases in $v_c$ cannot be counterbalanced by enhanced immunity, effectively setting an upper bound on pathogen suppression within the vector population. Consequently, the interplay between high transmission efficiency and constrained immune investment drives sustained endemic transmission, highlighting a fundamental evolutionary and epidemiological trade-off that shapes disease dynamics in highly endemic regions. The difference between the immune investment required to counterbalance $v_c$ diminishes as $v_c$ increases, indicating that vectors have progressively less capacity to suppress pathogen replication at higher transmission efficiencies. This pattern underscores a fundamental biological constraint: under strong environmental pressure (e.g. tropical weather, monsoons, high population densities), even modest increases in $v_c$ can overwhelm the vector’s fixed immune defenses, dramatically elevating the likelihood of endemic transmission. Collectively, these results highlight a critical evolutionary and epidemiological trade-off—vectors can enhance transmission potential but cannot proportionally bolster immunity, making highly endemic conditions almost inevitable in favorable environments.\\

\section{Conclusion and future work}

In this study, we model the vector-pathogen interaction using a classical predator-prey framework. By incorporating Holling’s functional responses (Type I, II, and III), we analyzed how varying consumption rates influence the balance between vector tolerance and resistance. This approach allowed us to systematically investigate pathogen successful replication in \textit{Aedes} vectors and its contribution to virus transmission. Our results highlight vector competence ($v_c$)—the ability to acquire, replicate, and transmit the pathogen—as a key determinant of disease persistence\\

Through analytical investigation, we demonstrated the critical role of vector biology in shaping dengue persistence and control. By framing vector–pathogen interactions as a predator–prey system, we identified the range of $v_c$—an evolving trait governing pathogen acquisition, replication, and transmission—that determines whether dengue persists or is eradicated. Equilibrium analysis of the model in \eqref{holling-specific}–\eqref{final model} showed that as the functional response shifts from Type I to III, the lower bound of $v_c$ for disease persistence increases, reflecting stronger pathogen pressure under tropical and subtropical environmental conditions. The pathogen inactivation rate ($\delta$), representing the innate immune capacity of vectors constrain evolutionary responses: vectors can increase competence but cannot enhance immune defense, highlighting a fundamental biological trade-off.\\

This study focuses on vector–pathogen interactions to examine how $v_c$—an evolving trait—and the innate immune system shape dengue persistence under tropical and subtropical environmental pressures. While excluding explicit human hosts limits the model’s ability to capture human heterogeneity, immunity, and behavior, this simplification is intentional: the study aims to isolate the vector–pathogen subsystem to clarify fundamental biological trade-offs, rather than model full transmission dynamics. Environmental effects, such as seasonality and climate, are represented indirectly via vector carrying capacity and functional responses, and continuous pathogen exposure is assumed to simplify analysis of endemic conditions. Despite these constraints, the framework provides new insights into how $v_c$ evolution and fixed immune capacity determine disease persistence, highlighting the limits of vector adaptation and the mechanisms driving endemic stability. Extending the model to include human hosts, explicit climate variables, or variable pathogen abundances could enhance predictive accuracy and inform control strategies, but such extensions are beyond the scope of this work. Experimental and field studies could further validate these findings, clarifying the relative contributions of $v_c$ and immune defenses to disease dynamics.

%Furthermore, leveraging dengue incidence data could help validate the model's predictions and assess the real-world applicability of the findings, particularly in understanding disease emergence within human populations. 

\pagebreak

\bibliographystyle{plainnat}
\bibliography{references}

\begin{thebibliography}{}

\bibitem[Agyingi et~al., 2020a]{agyingi2020modeling}
Agyingi, E.~O., Wiandt, T.~I., Buxbaum, L.~U., and Thomas, B.~N. (2020a).
\newblock Modeling the immune system response: an application to leishmaniasis.
\newblock {\em Mathematical Biosciences and Engineering}, 17(2):1253--1271.

\bibitem[Agyingi et~al., 2020b]{PP_12}
Agyingi, E.~O., Wiandt, T.~I., Buxbaum, L.~U., and Thomas, B.~N. (2020b).
\newblock Modeling the immune system response: an application to leishmaniasis.
\newblock {\em Mathematical Biosciences and Engineering}, 17(2):1253--1271.

\bibitem[Alomar et~al., 2022]{Vs31}
Alomar, A.~A., Eastmond, B.~H., Rapti, Z., Walker, E.~D., and Alto, B.~W.
  (2022).
\newblock Ingestion of spinosad-containing toxic sugar bait alters aedes
  albopictus vector competence and vectorial capacity for dengue virus.
\newblock {\em Frontiers in Microbiology}, 13:933482.

\bibitem[Bancroft, 1906]{bancroft1906aetiology}
Bancroft, T.~L. (1906).
\newblock On the aetiology of dengue fever.
\newblock {\em Austral. Med. Gaz.}, 25:17--18.

\bibitem[Beerntsen et~al., 2000]{Vs_12}
Beerntsen, B.~T., James, A.~A., and Christensen, B.~M. (2000).
\newblock Genetics of mosquito vector competence.
\newblock {\em Microbiology and molecular biology reviews}, 64(1):115--137.

\bibitem[Cansado-Utrilla et~al., 2021]{Vs33}
Cansado-Utrilla, C., Zhao, S.~Y., McCall, P.~J., Coon, K.~L., and Hughes, G.~L.
  (2021).
\newblock The microbiome and mosquito vectorial capacity: rich potential for
  discovery and translation.
\newblock {\em Microbiome}, 9(1):111.

\bibitem[Chathurangik et~al., 2022]{OP_1}
Chathurangik, P., De~Silva, K., and Perera, S. (2022).
\newblock Dengue disease prediction with seasonality of environmental factors.
\newblock {\em Communications in Combinatorics, Cryptography \& Computer
  Science}, 2023(1):29--38.

\bibitem[Chathurangika et~al., 2024a]{Op_2}
Chathurangika, P., Premadasa, L.~S., Perera, S., and De~Silva, K. (2024a).
\newblock Determining dengue infection risk in the colombo district of sri
  lanka by inferencing the genetic parameters of aedes mosquitoes.
\newblock {\em BMC Infectious Diseases}, 24(1):944.

\bibitem[Chathurangika et~al., 2024b]{chathurangika2024determining}
Chathurangika, P., Premadasa, L.~S., Perera, S., and De~Silva, K. (2024b).
\newblock Determining dengue infection risk in the colombo district of sri
  lanka by inferencing the genetic parameters of aedes mosquitoes.
\newblock {\em BMC Infectious Diseases}, 24(1):944.

\bibitem[Couderc et~al., 2025]{couderc2025decoding}
Couderc, E., Lambrechts, L., and Merkling, S.~H. (2025).
\newblock Decoding mosquito--virus interactions: from classical genetics to
  multi-omics.
\newblock {\em Trends in Microbiology}.

\bibitem[Derouich et~al., 2003]{G_12}
Derouich, M., Boutayeb, A., and Twizell, E. (2003).
\newblock A model of dengue fever.
\newblock {\em Biomedical engineering online}, 2:1--10.

\bibitem[Fenton and Perkins, 2010a]{fenton2010applying}
Fenton, A. and Perkins, S.~E. (2010a).
\newblock Applying predator-prey theory to modelling immune-mediated,
  within-host interspecific parasite interactions.
\newblock {\em Parasitology}, 137(6):1027--1038.

\bibitem[Fenton and Perkins, 2010b]{G_61}
Fenton, A. and Perkins, S.~E. (2010b).
\newblock Applying predator-prey theory to modelling immune-mediated,
  within-host interspecific parasite interactions.
\newblock {\em Parasitology}, 137(6):1027--1038.

\bibitem[Friedman, 2022]{friedman2022hierarchy}
Friedman, R. (2022).
\newblock A hierarchy of interactions between pathogenic virus and vertebrate
  host.
\newblock {\em Symmetry}, 14(11):2274.

\bibitem[Gantz et~al., 2015]{Vs46}
Gantz, V.~M., Jasinskiene, N., Tatarenkova, O., Fazekas, A., Macias, V.~M.,
  Bier, E., and James, A.~A. (2015).
\newblock Highly efficient cas9-mediated gene drive for population modification
  of the malaria vector mosquito anopheles stephensi.
\newblock {\em Proceedings of the National Academy of Sciences},
  112(49):E6736--E6743.

\bibitem[Gibbons, 2002]{G_96}
Gibbons, R.~V. (2002).
\newblock Dengue: an escalating problem.

\bibitem[Hardy et~al., 1983]{hardy1983}
Hardy, J.~L., Houk, E.~J., Kramer, L.~D., and Reeves, W.~C. (1983).
\newblock Intrinsic factors affecting vector competence of mosquitoes for
  arboviruses.
\newblock {\em Annual review of entomology}, 28(1):229--262.

\bibitem[Hassell and May, 1986]{hassell1986generalist}
Hassell, M. and May, R. (1986).
\newblock Generalist and specialist natural enemies in insect predator-prey
  interactions.
\newblock {\em The Journal of Animal Ecology}, pages 923--940.

\bibitem[Hassell, 1978]{PP_100}
Hassell, M.~P. (1978).
\newblock The dynamics of arthropod predator-prey systems.

\bibitem[Hassell, 2000]{hassell2000host}
Hassell, M.~P. (2000).
\newblock Host--parasitoid population dynamics*.
\newblock {\em Journal of Animal Ecology}, 69(4).

\bibitem[Hoffmann et~al., 2011]{Vs44}
Hoffmann, A.~A., Montgomery, B., Popovici, J., Iturbe-Ormaetxe, I., Johnson,
  P., Muzzi, F., Greenfield, M., Durkan, M., Leong, Y., Dong, Y., et~al.
  (2011).
\newblock Successful establishment of wolbachia in aedes populations to
  suppress dengue transmission.
\newblock {\em Nature}, 476(7361):454--457.

\bibitem[Holling, 1965]{Ho_02}
Holling, C.~S. (1965).
\newblock The functional response of predators to prey density and its role in
  mimicry and population regulation.
\newblock {\em The Memoirs of the Entomological Society of Canada},
  97(S45):5--60.

\bibitem[Holling, 1966]{Ho_03}
Holling, C.~S. (1966).
\newblock The functional response of invertebrate predators to prey density.
\newblock {\em The Memoirs of the Entomological Society of Canada},
  98(S48):5--86.

\bibitem[Ito et~al., 2002]{Vs45}
Ito, J., Ghosh, A., Moreira, L.~A., Wimmer, E.~A., and Jacobs-Lorena, M.
  (2002).
\newblock Transgenic anopheline mosquitoes impaired in transmission of a
  malaria parasite.
\newblock {\em Nature}, 417(6887):452--455.

\bibitem[Ives et~al., 2005]{ives2005synthesis}
Ives, A.~R., Cardinale, B.~J., and Snyder, W.~E. (2005).
\newblock A synthesis of subdisciplines: predator--prey interactions, and
  biodiversity and ecosystem functioning.
\newblock {\em Ecology Letters}, 8(1):102--116.

\bibitem[Jain et~al., 2024]{PP_13}
Jain, S., Dubey, B., Zheng, Q.~Q., Pandey, V., and Mathur, P. (2024).
\newblock Eco-epidemiological model of predator-prey with two-strain
  infections: The impact of herd behavior.

\bibitem[Kain et~al., 2022]{Vs14}
Kain, M., Skinner, E., Athni, T., Ramirez, A., Mordecai, E., and van~den Hurk,
  A. (2022).
\newblock Not all mosquitoes are created equal: A synthesis of vector
  competence experiments reinforces virus associations of australian
  mosquitoes.
\newblock {\em PLoS Negl Trop Dis.}, 16(10):e0010768.

\bibitem[Kalinkat et~al., 2023]{kalinkat2023empirical}
Kalinkat, G., Rall, B.~C., Uiterwaal, S.~F., and Uszko, W. (2023).
\newblock Empirical evidence of type iii functional responses and why it
  remains rare.
\newblock {\em Frontiers in Ecology and Evolution}, 11:1033818.

\bibitem[Kumar et~al., 2018]{G_44}
Kumar, A., Srivastava, P., Sirisena, P., Dubey, S.~K., Kumar, R., Shrinet, J.,
  and Sunil, S. (2018).
\newblock Mosquito innate immunity.
\newblock {\em Insects}, 9(3):95.

\bibitem[Leyria et~al., 2025]{Im_18}
Leyria, J., Fruttero, L.~L., Paglione, P.~A., and Canavoso, L.~E. (2025).
\newblock How insects balance reproductive output and immune investment.
\newblock {\em Insects}, 16(3):311.

\bibitem[Mariconti et~al., 2019a]{Vs_8}
Mariconti, M., Obadia, T., Mousson, L., Malacrida, A., Gasperi, G., Failloux,
  A.-B., and Yen, P.-S. (2019a).
\newblock Estimating the risk of arbovirus transmission in southern europe
  using vector competence data.
\newblock {\em Scientific Reports}, 9(1):17852.

\bibitem[Mariconti et~al., 2019b]{Vs8}
Mariconti, M., Obadia, T., Mousson, L., Malacrida, A., Gasperi, G., Failloux,
  A.-B., and Yen, P.-S. (2019b).
\newblock Estimating the risk of arbovirus transmission in southern europe
  using vector competence data.
\newblock {\em Scientific Reports}, 9(1):17852.

\bibitem[Mayer et~al., 2016]{Im_17}
Mayer, A., Mora, T., Rivoire, O., and Walczak, A.~M. (2016).
\newblock Diversity of immune strategies explained by adaptation to pathogen
  statistics.
\newblock {\em Proceedings of the National Academy of Sciences},
  113(31):8630--8635.

\bibitem[Mercado-Curiel et~al., 2008]{Vs_13}
Mercado-Curiel, R.~F., Black, W.~C., and Munoz, M. d.~L. (2008).
\newblock A dengue receptor as possible genetic marker of vector competence in
  aedes aegypti.
\newblock {\em BMC microbiology}, 8:1--15.

\bibitem[Merkling et~al., 2025]{merkling2025dengue}
Merkling, S.~H., Couderc, E., Crist, A.~B., Dabo, S., Daron, J., Jupatanakul,
  N., Burckbuchler, M., Vial, T., Sismeiro, O., Legendre, R., et~al. (2025).
\newblock Dengue virus susceptibility in aedes aegypti linked to natural
  cytochrome p450 promoter variants.
\newblock {\em Nature Communications}, 16(1):7468.

\bibitem[Moore et~al., 2010]{G_63}
Moore, S.~M., Borer, E.~T., and Hosseini, P.~R. (2010).
\newblock Predators indirectly control vector-borne disease: linking
  predator--prey and host--pathogen models.
\newblock {\em Journal of the Royal Society Interface}, 7(42):161--176.

\bibitem[Mukherjee et~al., 2019a]{mukherjee2019mosquito}
Mukherjee, D., Das, S., Begum, F., Mal, S., and Ray, U. (2019a).
\newblock The mosquito immune system and the life of dengue virus: what we know
  and do not know.
\newblock {\em Pathogens}, 8(2):77.

\bibitem[Mukherjee et~al., 2019b]{Im_01}
Mukherjee, D., Das, S., Begum, F., Mal, S., and Ray, U. (2019b).
\newblock The mosquito immune system and the life of dengue virus: what we know
  and do not know.
\newblock {\em Pathogens}, 8(2):77.

\bibitem[Novak and Stouffer, 2021]{novak2021geometric}
Novak, M. and Stouffer, D.~B. (2021).
\newblock Geometric complexity and the information-theoretic comparison of
  functional-response models.
\newblock {\em Frontiers in Ecology and Evolution}, 9:740362.

\bibitem[Obadia et~al., 2022]{Vs_39}
Obadia, T., Gutierrez-Bugallo, G., Duong, V., Nu{\~n}ez, A.~I., Fernandes,
  R.~S., Kamgang, B., Hery, L., Gomard, Y., Abbo, S.~R., Jiolle, D., et~al.
  (2022).
\newblock Zika vector competence data reveals risks of outbreaks: the
  contribution of the european zikalliance project.
\newblock {\em Nature Communications}, 13(1):4490.

\bibitem[Schwenke et~al., 2016]{F_12}
Schwenke, R.~A., Lazzaro, B.~P., and Wolfner, M.~F. (2016).
\newblock Reproduction--immunity trade-offs in insects.
\newblock {\em Annual review of entomology}, 61(1):239--256.

\bibitem[Shaw et~al., 2018]{Im_15}
Shaw, D.~K., Tate, A.~T., Schneider, D.~S., Levashina, E.~A., Kagan, J.~C.,
  Pal, U., Fikrig, E., and Pedra, J.~H. (2018).
\newblock Vector immunity and evolutionary ecology: the harmonious dissonance.
\newblock {\em Trends in immunology}, 39(11):862--873.

\bibitem[Smith et~al., 2014]{Vs42}
Smith, D.~L., Perkins, T.~A., Reiner~Jr, R.~C., Barker, C.~M., Niu, T., Chaves,
  L.~F., Ellis, A.~M., George, D.~B., Le~Menach, A., Pulliam, J.~R., et~al.
  (2014).
\newblock Recasting the theory of mosquito-borne pathogen transmission dynamics
  and control.
\newblock {\em Transactions of the Royal Society of Tropical Medicine and
  Hygiene}, 108(4):185--197.

\bibitem[Smith and Price, 1973]{eG_03}
Smith, J.~M. and Price, G.~R. (1973).
\newblock The logic of animal conflict.
\newblock {\em Nature}, 246(5427):15--18.

\bibitem[Souza-Neto et~al., 2019a]{souza2019aedes}
Souza-Neto, J.~A., Powell, J.~R., and Bonizzoni, M. (2019a).
\newblock Aedes aegypti vector competence studies: A review.
\newblock {\em Infection, genetics and evolution}, 67:191--209.

\bibitem[Souza-Neto et~al., 2019b]{G_41}
Souza-Neto, J.~A., Powell, J.~R., and Bonizzoni, M. (2019b).
\newblock Aedes aegypti vector competence studies: A review.
\newblock {\em Infection, genetics and evolution}, 67:191--209.

\bibitem[Tauszig-Delamasure et~al., 2002]{tauszig2002drosophila}
Tauszig-Delamasure, S., Bilak, H., Capovilla, M., Hoffmann, J.~A., and Imler,
  J.-L. (2002).
\newblock Drosophila myd88 is required for the response to fungal and
  gram-positive bacterial infections.
\newblock {\em Nature immunology}, 3(1):91--97.

\bibitem[Ulgheri et~al., 2025]{ulgheri2025decoding}
Ulgheri, F.~M., Bernardes, B.~G., and Lancellotti, M. (2025).
\newblock Decoding dengue: A global perspective, history, role, and challenges.
\newblock {\em Pathogens}, 14(9):954.

\bibitem[{Vector Disease Control International (VDCI)}, 2025]{VDCI}
{Vector Disease Control International (VDCI)} (2025).
\newblock Mosquito biology 101: Life cycle.
\newblock
  \url{https://www.vdci.net/mosquito-biology-101-life-cycle#:~:text=Females\%20with\%20an\%20adequate\%20food,in\%20addition\%20to\%20plant\%20nectar}.

\bibitem[Voskarides et~al., 2018]{G_45}
Voskarides, K., Christaki, E., and Nikolopoulos, G.~K. (2018).
\newblock Influenza virus—host co-evolution. a predator-prey relationship?
\newblock {\em Frontiers in immunology}, 9:2017.

\bibitem[Wang et~al., 2025]{wang2025gene}
Wang, G.-H., Hoffmann, A., and Champer, J. (2025).
\newblock Gene drive and symbiont technologies for control of mosquito-borne
  diseases.
\newblock {\em Annual Review of Entomology}, 70.

\bibitem[Xi et~al., 2008a]{xi2008aedes}
Xi, Z., Ramirez, J.~L., and Dimopoulos, G. (2008a).
\newblock The aedes aegypti toll pathway controls dengue virus infection.
\newblock {\em PLoS pathogens}, 4(7):e1000098.

\bibitem[Xi et~al., 2008b]{G_64}
Xi, Z., Ramirez, J.~L., and Dimopoulos, G. (2008b).
\newblock The aedes aegypti toll pathway controls dengue virus infection.
\newblock {\em PLoS pathogens}, 4(7):e1000098.

\bibitem[Xiao and Chen, 2001]{PP_05}
Xiao, Y. and Chen, L. (2001).
\newblock Modeling and analysis of a predator--prey model with disease in the
  prey.
\newblock {\em Mathematical biosciences}, 171(1):59--82.

\end{thebibliography}

%\backmatter

\section*{Supplementary information}

\begin{appendices}
	
	\section{Local Stability Analysis}\label{secA1a}
\subsection{Functional response Type - I}
The Jacobian matrix of the system in \eqref{holling-specific} - \eqref{final model}, when $q=0$ (Functional response I is given by,
$J_{q=0}=\begin{bmatrix}
\mu_vv_c\left(1-\dfrac{S_v}{Kv_c}\right)-P\beta_vv_c-\dfrac{S_v\mu_v}{K}		&	-S_v\beta_vv_c\\
P\alpha \beta_vv_c		&		S_v\alpha\beta_vv_c-\delta
\end{bmatrix}$

The Trace and determinant conditions for the stability of Equilibrium points of the system are given by,

\begin{table}[ht!]
\centering
\footnotesize
 \caption{\label{eql} The stability conditions of the equilibrium points of the model in \eqref{holling-specific} - \eqref{final model}}%in \eqref{holling-specific} - \eqref{final model}  }
 \begin{tabular}{c c c c}
%\hline
\toprule
Functional Response	&Equilibrium& Tr $J $(stability condition)&$\det J $(stability condition)\\
\midrule
\multirow{3}{*}{I} &$E_1$ &	$v_c<\dfrac{\delta}{\mu_v}$	&	$-v_c\mu_v\delta>0$\\
&	$E_2$	&	$K\alpha \beta_v v_c^2-\mu_v v_c-\delta<0$	&	$v_c^2<\dfrac{\delta}{K\alpha \beta_v}$\\
&	$E_3$	&	$\dfrac{-\delta\mu_v}{K\alpha \beta_v v_c}<0$	&	$v_c^2>\dfrac{\delta}{\alpha K \beta_v}$\\
%\hline
%\multirow{3}{*}{II} &$E_1$	&	$v_c<\dfrac{\delta}{\mu_v}$	&	$-v_c\mu_v\delta>0$\\
%%\hline
%&$E_2$&	$-\beta_v\mu_v E_p Kv_c^3+\beta_vK(\alpha-\delta E_p)v_c^2-\mu_v v_c -\delta<0$& $v_c^2>\dfrac{\delta}{\beta_v K (\alpha-\delta E_p)}$\\
%%$E_2$&	$-v_c\mu_v-\beta_v v_c^2 K(v_c\mu_v h-\alpha)-\delta(1+\beta_vv_c^2hK)<0$& $\delta+\beta_vv_c^2K(\delta h-\alpha)>0$\\
%%$E_3$ & $v_c^2<\dfrac{\alpha+\delta h}{K\beta_v(\alpha-\delta h)}$ & $v_c^2>\dfrac{\delta}{K\beta_v(1-\delta h)}$\\
%%\hline
% %$E_3$	&	$K\beta_v v_c^2(\delta h-\alpha)+\alpha+\delta h <0$ & $K\beta_vv_c^2 (\alpha-\delta h)-\delta >0$\\
% &$E_3$	&	$v_c^2<\dfrac{\alpha+\delta E_p}{K\beta_vE_p(\alpha-\delta E_p)}$ & $v_c^2<\dfrac{\delta}{K\beta_v(\alpha-\delta E_p)}$\\
%\hline
% \multirow{3}{*}{III} &$E_1$ &	$v_c<\dfrac{\delta}{\mu_v}$	&	$-v_c\mu_v\delta>0$\\
% &$E_2$	&	$K^2\beta_v\mu_vhv_c^3-K^2\beta_v(\alpha-\delta E_p)v_c^2+\mu_v>0$& $v_c^3<\dfrac{\delta}{K^2\beta_v(\alpha-\delta E_p)}$\\
% &$E_3$	&	$v_c^3<\dfrac{2\delta^2 E_p}{K^2\beta_v(\alpha-\delta E_p)(2\delta E_p-\alpha)^2}$	&	$v_c^3>\dfrac{\delta}{K^2\beta_v(\alpha-\delta E_p)}$\\
 \bottomrule
 \end{tabular}

\end{table}

\subsection{Functional response Type - III}

The Jacobian matrix of the system in \eqref{holling-specific} - \eqref{final model}, when $q=2$ (Functional response III is given by,

\vspace{0.25cm}
$J_{q=2}=\begin{bmatrix}
-\dfrac{2PS_v\beta_vv_c}{(S_v^2\beta_vE_pv_c+1)^2}+\mu_vv_c-2\dfrac{\mu_vS_v}{K}		&	\dfrac{-S_v^2\beta_vv_c}{S_v^2\beta_vE_pv_c+1}\\
\dfrac{2PS_v\alpha\beta_vv_c}{S_v^2\beta_vE_pv_c+1)^2}				&		\dfrac{S_v^2\alpha\beta_vv_c}{S_v^2\beta_vE_pv_c+1}-\delta
\end{bmatrix}$
\vspace{0.25cm}

The Trace and determinant conditions for the stability of Equilibrium points of the system are given by,

\begin{table}[ht!]
\centering
\footnotesize
 \caption{\label{eql} The stability conditions of the equilibrium points of the model in \eqref{holling-specific} - \eqref{final model}}%\eqref{holling-specific} - \eqref{final model}  }
 \begin{tabular}{c c c c}
%\hline
\toprule
Functional Response	&Equilibrium& Tr $J $(stability condition)&$\det J $(stability condition)\\
\midrule
%\multirow{3}{*}{I} &$E_1$ &	$v_c<\dfrac{\delta}{\mu_v}$	&	$-v_c\mu_v\delta>0$\\
%&	$E_2$	&	$K\alpha \beta_v v_c^2-\mu_v v_c-\delta<0$	&	$v_c^2<\dfrac{\delta}{K\alpha \beta_v}$\\
%&	$E_3$	&	$\dfrac{-\delta\mu_v}{K\alpha \beta_v v_c}<0$	&	$v_c^2>\dfrac{\delta}{\alpha K \beta_v}$\\
%\hline
%\multirow{3}{*}{II} &$E_1$	&	$v_c<\dfrac{\delta}{\mu_v}$	&	$-v_c\mu_v\delta>0$\\
%%\hline
%&$E_2$&	$-\beta_v\mu_v E_p Kv_c^3+\beta_vK(\alpha-\delta E_p)v_c^2-\mu_v v_c -\delta<0$& $v_c^2>\dfrac{\delta}{\beta_v K (\alpha-\delta E_p)}$\\
%%$E_2$&	$-v_c\mu_v-\beta_v v_c^2 K(v_c\mu_v h-\alpha)-\delta(1+\beta_vv_c^2hK)<0$& $\delta+\beta_vv_c^2K(\delta h-\alpha)>0$\\
%%$E_3$ & $v_c^2<\dfrac{\alpha+\delta h}{K\beta_v(\alpha-\delta h)}$ & $v_c^2>\dfrac{\delta}{K\beta_v(1-\delta h)}$\\
%%\hline
% %$E_3$	&	$K\beta_v v_c^2(\delta h-\alpha)+\alpha+\delta h <0$ & $K\beta_vv_c^2 (\alpha-\delta h)-\delta >0$\\
% &$E_3$	&	$v_c^2<\dfrac{\alpha+\delta E_p}{K\beta_vE_p(\alpha-\delta E_p)}$ & $v_c^2<\dfrac{\delta}{K\beta_v(\alpha-\delta E_p)}$\\
%\hline
 \multirow{3}{*}{III} &$E_1$ &	$v_c<\dfrac{\delta}{\mu_v}$	&	$-v_c\mu_v\delta>0$\\
 &$E_2$	&	$K^2\beta_v\mu_vhv_c^3-K^2\beta_v(\alpha-\delta E_p)v_c^2+\mu_v>0$& $v_c^3<\dfrac{\delta}{K^2\beta_v(\alpha-\delta E_p)}$\\
 &$E_3$	&	$v_c^3<\dfrac{2\delta^2 E_p}{K^2\beta_v(\alpha-\delta E_p)(2\delta E_p-\alpha)^2}$	&	$v_c^3>\dfrac{\delta}{K^2\beta_v(\alpha-\delta E_p)}$\\
 \bottomrule
 \end{tabular}

\end{table}

\end{appendices}

%\section*{Acknowledgements}
%%Dengue data for this study were kindly provided by Dr. Sudath Samaraweera, the Director of the National Dengue Control Unit, Ministry of Health of Sri Lanka. The authors would like to acknowledge the meaningful discussions had with Dr. Adom Giffin in improving the methodology of this work.
%
%\section*{Declarations}
%
%%Some journals require declarations to be submitted in a standardised format. Please check the Instructions for Authors of the journal to which you are submitting to see if you need to complete this section. If yes, your manuscript must contain the following sections under the heading `Declarations':
%
%%\begin{itemize}
%\subsection*{Funding}
%
%Not applicable
%
%\subsection*{Conflict of interest/Competing interests}
%
%The authors declare that they have no competing interests.
%
%\subsection*{Ethics approval and consent to participate}
%
%Not applicable
%
%\subsection*{Consent for publication}
%
%Not applicable
%
%\subsection*{Data availability }
%
%Not applicable 
%
%\subsection*{Materials availability}
%
%Not applicable
%
%\subsection*{Code availability }
%
%Not applicable 
\subsection*{Author contribution}
P. C. - Investigation, Methodology, Numerical simulations, Writing the original draft, and editing. T. P. - Methodology, Writing Appendix. L. S. P. - Biological validation of the model and the results. S. S. N. P. - Review and discussions. K. D. S. - Conceptualization, Investigation, Methodology, Writing the original draft, and editing. All authors read and approved the final manuscript.

\pagebreak
\end{document}